\newtheorem{theorem}{Theorem}[section]
\newtheorem{proposition}[theorem]{Proposition}
\newtheorem{lemma}[theorem]{Lemma}
\newtheorem{remark}[theorem]{Remark}
\newtheorem{corollary}[theorem]{Corollary}
\newtheorem{claim}[theorem]{Claim}
\newtheorem{definition}[theorem]{Definition}
\author{Mark Whitmeyer \thanks{Arizona State University, \href{mailto:mark.whitmeyer@gmail.com}{mark.whitmeyer@gmail.com}. I thank Joseph Whitmeyer for his comments. Special thanks is due to Michelle Whitmeyer, who has already taught me a lot about patience.}}
\title{Comparative Patience}
\begin{document}

\begin{abstract}
We begin by formulating and characterizing a dominance criterion for prize sequences: \(x\) dominates \(y\) if any impatient agent prefers \(x\) to \(y\). With this in hand, we define a notion of comparative patience. Alice is more patient than Bob if Alice's normalized discounted utility gain by going from any \(y\) to any dominating \(x\) is less than Bob's discounted utility gain from such an improvement. We provide a full characterization of this relation in terms of the agents' discount rules.
\end{abstract}

\newpage

\section{Introduction}

Typical economic agents prefer that good things happen sooner rather than later. That is, they are impatient. This paper seeks to clarify what exactly it means for one person to be more patient than another. 

One possible definition of comparative patience is the na\"{i}ve one: Alice is more patient than Bob if for any stream of payoffs, Alice's discounted sum of utilities is larger than Bob's. This is the modal definition of comparative patience in the literature--see for instance the literature on repeated games, where ``sufficiently patient'' refers to an exponential discount factor that is close enough to one.

Is this really what it means to be more patient, however? In particular, suppose Alice's discounted sum of utilities is higher than Bob's for some stream of payoffs, and it remains higher after we delay the arrival rate of the awards, yet Alice suffers a loss in her value for the stream as a result of the delay whereas Bob is unaffected. From this perspective, it seems like Bob ought to be characterized as more patient than Alice, as he is unfazed by the delay.

Our agenda in this paper, therefore, is to provide a behavioral characterization of what it means for one agent to be more patient than another. Here is how we define greater patience: Alice is more patient than Bob if the normalized difference between the discounted sums of her payoffs for any two payoff sequences with one worse (more delayed) than the other is lower than Bob's. Of course, this definition relies heavily on what we mean by ``worse'' in the context of payoff sequences. Thus, our first task is to define and characterize this concept.

We say that one prize sequence \(x\) is superior to another sequence \(y\) if \(\sum_{t=1}^{T} \beta_t x_t \geq \sum_{t=1}^{T} \beta_t y_t\) for any positive decreasing sequence \(\beta = \left\{\beta_t\right\}_{t=1}^{T}\), where the time horizon \(T\) is either finite or infinite. In words, \(x\) is superior to \(y\) if any impatient agent (one who discounts the future) prefers \(x\) to \(y\). Inferiority in this sense, then, is how we formalize one sequence being worse than another. Our first result, Proposition \ref{superiority}, characterizes this superiority in terms of the sequences \(x\) and \(y\) themselves.

We then turn our attention to comparative patience, defined as follows. Alice, who discounts according to discounting sequence \(\alpha\), is more patient than Bob, who discounts according to \(\beta\), if for any positive prize sequences \(x\) and \(y\) with \(x\) superior to \(y\) and \(\sum_{t=1}^{T} x_t = \sum_{t=1}^{T} y_t\) (and such that the denominators of the following are nonzero),
\[\frac{\sum_{t=1}^{T} \alpha_t x_t}{\sum_{t=1}^{T} \alpha_t y_t} \leq \frac{\sum_{t=1}^{T} \beta_t x_t}{\sum_{t=1}^{T} \beta_t y_t}\text{.}\]
This is equivalent to the normalized difference in Alice's discounted payoffs from \(x\) and \(y\) being less than the difference of Bob's discounted payoffs from the two streams.

In the main result of the paper, Theorem \ref{maintheorem}, we show that Alice is more patient than Bob if and only if \[\frac{\beta_t - \beta_{t+1}}{\alpha_t - \alpha_{t+1}} \geq \frac{\beta_1}{\alpha_1}\] \textit{for every} \(t\). This is an exceptionally strong condition; for one, it implies that \(\alpha_t/\beta_t\) is monotonically increasing in \(t\). Notably, it is only if the time horizon is two that it is satisfied for any two exponential discounters. That is, if there are three or more periods Alice can have exponential discount factor \(a\) and Bob \(b\), with \(a > b\), yet Alice is not more patient than Bob.

Economists have long been interested in the time preferences of agents (\cite{von1891positive}, \cite{fisher1930theory}). Many--\cite{koopmans1960stationary}, \cite{burness1973impatience}, \cite{burness1973impatience}, and \cite{nachman1975risk}--have explored patience (impatience) and how agents evaluate consumption streams, or at least ought to. We ignore these important issues. Instead, we merely specify that agents' utility streams over per-period rewards are additively separable over time, and that the agents are impatient in the sense that they weigh these utilities according to a decreasing discount function.

Two works are especially close to this one. The first is the working-paper version \cite{chambers2018multiple}. There, the authors provide a robust ranking of reward streams, where one dominates another if it has a higher discounted sum for all exponential discounters. The first part of this paper, instead, provides a robust ranking of reward streams, where the superiority is required to hold for all discounters rather than exponential ones.

The second similar work is \cite{quah2013discounting}, who explore the effect of an agent's discount rate on her behavior in stopping problems. They formulate a ``more patient than'' partial order. In this conception, Alice is more patient than Bob if \(\alpha_t/\beta_t\) is increasing in \(t\). They show that this is equivalent to Bob preferring a later prize to an earlier one implying Alice does too. In general, the patience order in this paper is strictly stronger than \citeauthor{quah2013discounting}'s, though in Proposition \ref{twoserene}, we show that when there are just two periods the relations are equivalent. For three or more periods; however, they are not, and, in fact, a more patient Alice in \citeauthor{quah2013discounting}'s sense may suffer a greater normalized loss from moving from a superior to an inferior stream than Bob.


\section{Dominance of Sequences}

Throughout the paper, we make the standing assumption that all infinite sequences under consideration are absolutely summable. For \(T \in \mathbb{N} \cup \left\{\infty\right\}\), let \(x = \left\{x_t\right\}_{t=1}^{T}\) and \(y = \left\{y_t\right\}_{t=1}^{T}\) be two prize sequences (in utils). We define \[\mathcal{T} \coloneqq \left\{t \in \mathbb{N} \colon t \leq T\right\}\text{.}\] A sequence, \(\beta\), is \textcolor{Aquamarine}{positive} if \(\beta_t \geq 0\) for all \(t \in \mathcal{T}\) and \textcolor{Aquamarine}{decreasing} if \(\beta_t \geq \beta_{t+1}\) for all \(t \in \mathcal{T}\). Here are three notions of superiority of prize sequences.
\begin{definition}
    Prize sequence \(x\) \textcolor{Aquamarine}{dominates} \(y\) \textcolor{Aquamarine}{pointwise} if \(x_t \geq y_t\) for all \(t \in \mathcal{T}\).
\end{definition}

\begin{definition}
    Prize sequence \(x\) \textcolor{Aquamarine}{dominates} \(y\) if \(\sum_{t=1}^{p} x_t \geq \sum_{t=1}^{p} y_t\) for all \(p \in \mathcal{T}\). We write this \(x \trianglerighteq y\).
\end{definition}
Naturally, if \(x\) dominates \(y\) pointwise, \(x\) dominates \(y\), while the converse is false.
\begin{definition}
    Prize sequence \(x\) is \textcolor{Aquamarine}{superior} to \(y\) if \(\sum_{t=1}^{T} \beta_t x_t \geq \sum_{t=1}^{T} \beta_t y_t\) for any positive decreasing sequence \(\beta = \left\{\beta_t\right\}_{t=1}^{T}\).
\end{definition}

We note the following lemma. Its proof, as well as all proofs omitted from the main text, may be found in the Appendix (\ref{appendix}).
\begin{lemma}\label{pointlemma}
    If \(x \trianglerighteq y\), there exists a \(\tilde{x}\) that \(x\) dominates pointwise, and that satisfies \(\sum_{t=1}^{T} \tilde{x}_t = \sum_{t=1}^{T} y_t\) and
    \(\sum_{t=1}^{p} \tilde{x}_t \geq \sum_{t=1}^{p} y_t\) for all \(p \in \mathcal{T}\).
\end{lemma}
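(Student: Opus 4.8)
The plan is to construct $\tilde x$ directly by ``shaving off'' the excess mass of $x$ in a way that never dips below the partial sums of $y$. Write $X_p := \sum_{t=1}^{p} x_t$ and $Y_p := \sum_{t=1}^{p} y_t$ for the partial sums, with $X_0 = Y_0 = 0$. The hypothesis $x \trianglerighteq y$ is precisely that the slack $s_p := X_p - Y_p$ is nonnegative for every index $p$; note that $s_0 = 0$ and, by the standing absolute-summability assumption, $s_p \to D := X_T - Y_T \geq 0$ as $p \to T$. I would then define the nonnegative sequence $g_p := \inf_{q \geq p} s_q$ (over indices $q \in \{0\} \cup \mathcal{T}$) and set $\tilde X_p := X_p - g_p$, i.e.\ $\tilde x_t := x_t - (g_t - g_{t-1})$.

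The verification then splits into three checks. First, since the infimum is taken over a shrinking family as $p$ increases, $g$ is nondecreasing, so $g_t - g_{t-1} \geq 0$ and hence $\tilde x_t \leq x_t$; this is exactly ``$x$ dominates $\tilde x$ pointwise.'' Second, because $s_p$ itself belongs to the family defining $g_p$, we have $g_p \leq s_p$, whence $\tilde X_p = X_p - g_p \geq X_p - s_p = Y_p$, which is $\sum_{t=1}^{p} \tilde x_t \geq \sum_{t=1}^{p} y_t$. Third, for the total-sum equality I would show $g_p \to D$ as $p \to T$: since $s_q \to D$ and $\inf_{q \geq p} s_q$ is nondecreasing in $p$, one has $\lim_{p} g_p = \liminf_{q} s_q = D$, so $\tilde X_p = X_p - g_p \to X_T - D = Y_T$, giving $\sum_{t=1}^{T} \tilde x_t = Y_T$. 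For finite $T$ this last step is immediate from $g_T = s_T = D$.

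The finite-horizon case is thus a one-line verification once the construction is written down, so the only genuine obstacle is the infinite-horizon bookkeeping: one must confirm that $g$ is well defined (the infimum of the nonnegative convergent sequence $\{s_q\}$ is finite), that the limiting identity $g_p \to D$ holds, and that $\tilde x$ is itself absolutely summable. The last point follows because $\sum_t |g_t - g_{t-1}| = \sum_t (g_t - g_{t-1}) = D < \infty$ by monotonicity of $g$, so $\tilde x$ is a difference of two absolutely summable sequences. As a free byproduct worth recording, the same estimates show $\tilde x$ remains nonnegative whenever $x$ and $y$ are: from the identity $g_{t-1} = \min\{s_{t-1}, g_t\}$, either $g_t - g_{t-1} = 0$ (so $\tilde x_t = x_t \geq 0$) or $g_t - g_{t-1} = g_t - s_{t-1} \leq s_t - s_{t-1} = x_t - y_t$ (so $\tilde x_t \geq y_t \geq 0$).
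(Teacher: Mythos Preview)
Your construction is exactly the paper's: define the running infimum of the slack sequence (your $g_p$, the paper's $s_p$), subtract its increments from $x$, and verify the three properties via monotonicity of the infimum, the trivial bound $g_p \leq s_p$, and $\lim_p g_p = \liminf_q s_q$. The only differences are cosmetic---you carry the $p=0$ index through the infimum rather than setting it by fiat, and you add the absolute-summability and nonnegativity remarks, neither of which appears in the paper's proof.
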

This lemma is unsurprising. If the sum of the elements of \(x\) is strictly higher than the sum of the elements of \(y\), we can just remove excess amounts from particular elements of \(x\) in order to obtain a balance (equal sums), while preserving the dominance of the partial sums. Next we state the following lemma, which is a discrete version of the integration by parts formula.
\begin{lemma}[Abel's Lemma]\label{abel}
    Let \(a\) and \(b\) be two sequences of real numbers. For all \(k \in \mathcal{T}\), we define \(A_k = \sum_{i=1}^k a_i\). Then, for \(T \in \mathbb{N}\),
    \[\sum_{t=1}^T a_t b_t = \sum_{t=1}^{T-1} A_t (b_t - b_{t+1}) + A_T b_t\text{,}\]
    and for \(T = \infty\),
    \[\sum_{t=1}^\infty a_t b_t = \sum_{t=1}^\infty A_t (b_t - b_{t+1})\text{.}\]
\end{lemma}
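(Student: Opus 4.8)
The plan is to establish the finite identity by the standard summation-by-parts rearrangement and then obtain the infinite version by passing to a limit, using the paper's standing absolute-summability assumption to kill the boundary term.

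First, for the finite case, I would express each term $a_t$ as a difference of consecutive partial sums. Setting $A_0 \coloneqq 0$, we have $a_t = A_t - A_{t-1}$ for every $t \in \mathcal{T}$. Substituting this into the sum yields
\[\sum_{t=1}^T a_t b_t = \sum_{t=1}^T A_t b_t - \sum_{t=1}^T A_{t-1} b_t.\]
Reindexing the second sum (replacing $t$ by $t+1$ and using $A_0 = 0$) rewrites it as $\sum_{t=1}^{T-1} A_t b_{t+1}$. Pulling the $t=T$ term out of the first sum and combining the two sums then gives
\[\sum_{t=1}^T a_t b_t = \sum_{t=1}^{T-1} A_t(b_t - b_{t+1}) + A_T b_T,\]
which is the asserted formula. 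A proof by induction on $T$ is an equally clean alternative, with the inductive step amounting to the same telescoping observation.

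For the infinite case, I would take $T \to \infty$ in the finite identity. The standing assumption that all sequences under consideration are absolutely summable gives $\sum_t |a_t| < \infty$ and $\sum_t |b_t| < \infty$; hence the partial sums $A_T$ converge to a finite limit (so $\{A_T\}$ is in particular uniformly bounded) and $b_T \to 0$. The boundary term therefore satisfies $A_T b_T \to 0$, and the remaining series $\sum_{t=1}^\infty A_t(b_t - b_{t+1})$ converges. Passing to the limit in the finite formula yields $\sum_{t=1}^\infty a_t b_t = \sum_{t=1}^\infty A_t(b_t - b_{t+1})$.

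The only step calling for genuine care is the infinite case; the finite identity is pure reindexing. The point to verify is precisely that the boundary term $A_T b_T$ vanishes in the limit, and this is exactly where absolute summability enters, since it simultaneously delivers the boundedness of $\{A_T\}$ and the decay $b_T \to 0$.
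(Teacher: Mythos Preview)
Your argument is correct: the finite identity is exactly the standard summation-by-parts rearrangement via $a_t = A_t - A_{t-1}$ and reindexing, and your passage to the limit in the infinite case is justified by the paper's standing absolute-summability hypothesis, which indeed gives both the boundedness of $\{A_T\}$ and $b_T \to 0$, killing the boundary term. The paper itself does not supply a proof of this lemma---it is stated as a known result (a discrete integration-by-parts formula) and no argument appears either in the main text or in the appendix---so there is nothing to compare against; your derivation is the standard one and would serve perfectly well as the omitted proof. (Incidentally, the displayed finite formula in the paper has a typo: the boundary term should read $A_T b_T$, as your computation confirms.)
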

Using Abel's lemma, we obtain our next result. This is almost exactly the characterization of first-order stochastic dominance--once we normalize the sums so that they sum to one--with the single exception that elements of \(\tilde{x}\) and \(y\) may take negative values.
\begin{lemma}\label{abel2}
    Let \(\tilde{x}\) and \(y\) be two sequences that satisfy \(\sum_{t=1}^{T} \tilde{x}_t = \sum_{t=1}^{T} y_t\) and
    \(\sum_{t=1}^{p} \tilde{x}_t \geq \sum_{t=1}^{p} y_t\) for all \(p \in \mathcal{T}\). Then, for any positive decreasing sequence \(\beta\), \(\sum_{t=1}^{T} \beta_t \tilde{x}_t \geq \sum_{t=1}^{T}\beta_t y_t\).
\end{lemma}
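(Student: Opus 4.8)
The plan is to reduce everything to the difference sequence and then invoke Abel's Lemma. Set \(d_t \coloneqq \tilde{x}_t - y_t\) and let \(D_p \coloneqq \sum_{t=1}^{p} d_t\) be its partial sums. The two hypotheses translate cleanly into statements about \(D\): the partial-sum domination \(\sum_{t=1}^{p} \tilde{x}_t \geq \sum_{t=1}^{p} y_t\) says exactly that \(D_p \geq 0\) for every \(p \in \mathcal{T}\), while the equal-totals condition \(\sum_{t=1}^{T} \tilde{x}_t = \sum_{t=1}^{T} y_t\) says that \(D_T = 0\) (in the infinite case, \(D_p \to 0\) as \(p \to \infty\)). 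The desired conclusion \(\sum_{t=1}^{T} \beta_t \tilde{x}_t \geq \sum_{t=1}^{T} \beta_t y_t\) is then equivalent to \(\sum_{t=1}^{T} \beta_t d_t \geq 0\).

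Next I would apply Abel's Lemma (Lemma \ref{abel}) with \(a_t = d_t\) and \(b_t = \beta_t\), so that the running sums \(A_t\) there are precisely our \(D_t\). For finite \(T\) this gives
\[\sum_{t=1}^{T} \beta_t d_t = \sum_{t=1}^{T-1} D_t (\beta_t - \beta_{t+1}) + D_T \beta_T \text{,}\]
and the boundary term \(D_T \beta_T\) vanishes because \(D_T = 0\). For \(T = \infty\) the infinite version of the lemma yields directly \(\sum_{t=1}^{\infty} \beta_t d_t = \sum_{t=1}^{\infty} D_t (\beta_t - \beta_{t+1})\), with no boundary term to manage.

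The conclusion is then immediate by inspecting signs. Each partial sum satisfies \(D_t \geq 0\) by the first hypothesis, and each increment satisfies \(\beta_t - \beta_{t+1} \geq 0\) because \(\beta\) is decreasing. Hence every summand \(D_t (\beta_t - \beta_{t+1})\) is nonnegative, so the entire sum is nonnegative, which delivers \(\sum_{t=1}^{T} \beta_t d_t \geq 0\) and therefore the claim. This is the standard summation-by-parts argument underlying first-order stochastic dominance, and indeed the lemma's hypotheses are the FOSD conditions applied to the difference of the two sequences.

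The only place demanding care is the infinite horizon, where I would lean on the standing absolute-summability assumption. Since \(\tilde{x}\) and \(y\) are absolutely summable, so is \(d\); this guarantees both that \(\sum_{t} \beta_t d_t\) converges (as \(\beta\) is bounded above by \(\beta_1\), being positive and decreasing) and that \(D_p \to 0\), which is the fact that makes the boundary term drop out in the limit form of Abel's Lemma. Apart from verifying these convergence niceties, the remainder is simply reading off the sign of a sum of products of nonnegative quantities.
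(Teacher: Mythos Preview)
Your proof is correct and follows essentially the same route as the paper: apply Abel's Lemma to the difference sequence, observe that each summand \(D_t(\beta_t-\beta_{t+1})\) is a product of nonnegative factors, and note that the finite-horizon boundary term vanishes because \(D_T=0\). Your added remarks on absolute summability in the infinite case are a welcome elaboration, but the core argument is identical to the paper's.
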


With these lemmas in hand, we equate superiority with dominance.
\begin{proposition}\label{superiority}
    \(x\) is superior to \(y\) if and only if \(x\) dominates \(y\).
\end{proposition}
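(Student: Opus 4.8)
The plan is to prove both implications using the lemmas just established, handling the easy direction with a direct choice of test sequences and the harder one by chaining Lemmas \ref{pointlemma} and \ref{abel2}.

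For the ``only if'' direction (superiority implies dominance), I would exploit the freedom in the choice of \(\beta\). Fix any \(p \in \mathcal{T}\) and consider the step sequence \(\beta^p\) given by \(\beta^p_t = 1\) for \(t \leq p\) and \(\beta^p_t = 0\) for \(t > p\). This sequence is positive and decreasing, so superiority of \(x\) over \(y\) yields \(\sum_{t=1}^{T} \beta^p_t x_t \geq \sum_{t=1}^{T} \beta^p_t y_t\), which is precisely \(\sum_{t=1}^{p} x_t \geq \sum_{t=1}^{p} y_t\). Since \(p\) was arbitrary, \(x \trianglerighteq y\). Even when \(T = \infty\) each of these sums collapses to a finite sum, so the test sequences cause no convergence issues.

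For the ``if'' direction (dominance implies superiority), I would compose the two lemmas. Starting from \(x \trianglerighteq y\), Lemma \ref{pointlemma} supplies a sequence \(\tilde{x}\) that \(x\) dominates pointwise and that is balanced against \(y\), meaning \(\sum_{t=1}^{T} \tilde{x}_t = \sum_{t=1}^{T} y_t\) while the partial-sum dominance \(\sum_{t=1}^{p} \tilde{x}_t \geq \sum_{t=1}^{p} y_t\) is retained for all \(p\). Lemma \ref{abel2} then applies directly to \(\tilde{x}\) and \(y\): for any positive decreasing \(\beta\) we obtain \(\sum_{t=1}^{T} \beta_t \tilde{x}_t \geq \sum_{t=1}^{T} \beta_t y_t\). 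Finally, since \(x_t \geq \tilde{x}_t\) pointwise and each \(\beta_t \geq 0\), a term-by-term comparison gives \(\sum_{t=1}^{T} \beta_t x_t \geq \sum_{t=1}^{T} \beta_t \tilde{x}_t\); combining the two inequalities yields \(\sum_{t=1}^{T} \beta_t x_t \geq \sum_{t=1}^{T} \beta_t y_t\), i.e.\ superiority.

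The substantive work has been offloaded into Lemmas \ref{pointlemma} and \ref{abel2}, so the main obstacle here is conceptual rather than computational: recognizing that dominance alone does not equalize the total sums, whereas the first-order-stochastic-dominance--style argument behind Lemma \ref{abel2} (via Abel summation) needs equal totals to run. The role of \(\tilde{x}\) is exactly to bridge this gap—trimming the surplus mass of \(x\) down to a balanced comparison while preserving both pointwise domination from above and partial-sum domination over \(y\)—after which the positivity of \(\beta\) absorbs the trimmed excess for free.
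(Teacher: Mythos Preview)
Your proof is correct and matches the paper's approach essentially line for line: step sequences \(\beta^p\) to extract the partial-sum inequalities, and the chain \(\sum\beta_t x_t \geq \sum\beta_t \tilde{x}_t \geq \sum\beta_t y_t\) via Lemmas~\ref{pointlemma} and~\ref{abel2} for the converse. The only cosmetic difference is that the paper frames the first direction as a contrapositive while you argue it directly.
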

\begin{proof}
    \(\left(\Leftarrow\right)\) Let \(x\) dominate \(y\). Then, for any positive decreasing sequence \(\beta\),
    \[\sum_{t=1}^{T} \beta_t x_t \geq \sum_{t=1}^{T} \beta_t \tilde{x}_t \geq \sum_{t=1}^{T}\beta_t y_t\text{,}\]
    where the first inequality follows from Lemma \ref{pointlemma} and the second from Lemma \ref{abel2}.

    \medskip

    \noindent \(\left(\Rightarrow\right)\) Suppose for the sake of contraposition that \(x\) does not dominate \(y\). Then, there exists some \(p \in \mathbb{N}\) such that 
    \[\sum_{t=1}^{p} x_t < \sum_{t=1}^{p} y_t\text{.}\]
    Then, let \(\beta_t =1\) for all \(t \in \left\{1,\dots,p\right\}\) and \(\beta_t = 0\) for all \(t \in \left\{p+1,\dots,T\right\}\). Consequently,
    \[\sum_{t=1}^{T} \beta_t x_t = \sum_{t=1}^p x_t < \sum_{t=1}^{p} y_t = \sum_{t=1}^{T} \beta_t y_t\text{.}\]
    and so \(x\) is not superior to \(y\).
\end{proof}
This proposition is more-or-less first-order stochastic dominance, which makes sense due to the mononicity of \(\beta\).

\section{Relative Patience}

Now that we have provided a ranking of prize streams, we can explore relative patience. Henceforth, we specify that any discounting sequence, \(\beta\), besides being absolutely summable and decreasing, is such that \(\beta_t > 0\) for any \(t \in \mathcal{T}\), i.e., is strictly positive.

A first pass at defining what it means for one agent to be more patient than another is the following.
\begin{definition}
    Alice, who discounts according to \(\alpha\), is more \textcolor{Aquamarine}{serene} than Bob, who discounts according to \(\beta\), if for any prize sequences \(x\) and \(y\) with \(x \trianglerighteq y\) and \(\sum_{t=1}^{T} x_t = \sum_{t=1}^{T} y_t\),
    \[\sum_{t=1}^{T} \alpha_t x_t - \sum_{t=1}^{T} \alpha_t y_t \leq \sum_{t=1}^{T} \beta_t x_t - \sum_{t=1}^{T} \beta_t y_t \text{.}\]
\end{definition}
We can characterize relative serenity in terms of the agents' discounting behavior.
\begin{proposition}\label{moreserene}
    Alice is more serene than Bob if and only if \(\beta - \alpha\) is a decreasing sequence with \(\beta_t \geq \alpha_t\) for all \(t \in \mathcal{T}\).
\end{proposition}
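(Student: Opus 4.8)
The plan is to reduce the statement to a single inequality about the difference sequence and then apply Abel's Lemma. Writing \(z \coloneqq x - y\) and \(\gamma \coloneqq \beta - \alpha\), the defining inequality for Alice being more serene than Bob rearranges to \(\sum_{t=1}^{T}\gamma_t z_t \geq 0\). The two hypotheses \(x \trianglerighteq y\) and \(\sum_t x_t = \sum_t y_t\) say precisely that the partial sums \(Z_p \coloneqq \sum_{t=1}^{p} z_t\) are nonnegative for every \(p \in \mathcal{T}\) and that \(Z_T = 0\). Conversely, any \(z\) with these two properties is realized by some admissible pair, e.g.\ \(x_t = \max\{z_t,0\}\) and \(y_t = \max\{-z_t,0\}\), which are positive, absolutely summable, and satisfy \(x_t - y_t = z_t\). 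Hence serenity is \emph{equivalent} to the requirement that
\[
\sum_{t=1}^{T}\gamma_t z_t \geq 0 \quad \text{for every } z \text{ with } Z_p \geq 0 \ (p \in \mathcal{T}) \text{ and } Z_T = 0.
\]

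Next I would invoke Abel's Lemma (Lemma \ref{abel}) with \(a_t = z_t\), so that \(A_t = Z_t\), and \(b_t = \gamma_t\). In the finite case the boundary term \(Z_T \gamma_T\) vanishes because \(Z_T = 0\); in the infinite case it vanishes because absolute summability of \(\alpha\) and \(\beta\) forces both \(Z_t \to 0\) and \(\gamma_t \to 0\). Either way the reduced quantity becomes
\[
\sum_{t=1}^{T}\gamma_t z_t = \sum_{t} Z_t\bigl(\gamma_t - \gamma_{t+1}\bigr),
\]
which is the engine of both directions. For sufficiency \((\Leftarrow)\), if \(\gamma = \beta - \alpha\) is decreasing with \(\gamma_t \geq 0\), then every factor \(\gamma_t - \gamma_{t+1} \geq 0\) while each \(Z_t \geq 0\), so the right-hand side is a sum of nonnegative terms and the reduced inequality holds; thus Alice is more serene than Bob.

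For necessity \((\Rightarrow)\), I would test the reduced inequality against well-chosen \(z\). Taking \(z\) equal to \(+1\) at date \(t\), \(-1\) at date \(t+1\), and \(0\) elsewhere yields partial sums \(Z_s = 1\) at \(s=t\) and \(Z_s = 0\) otherwise, with \(Z_T = 0\); the identity then collapses to \(\gamma_t - \gamma_{t+1} \geq 0\). Ranging over \(t\) shows that \(\gamma\) is decreasing. The pointwise bound \(\beta_t \geq \alpha_t\), i.e.\ \(\gamma_t \geq 0\), then follows from monotonicity together with the tail behavior of the discount rules: a decreasing sequence with \(\gamma_t \to 0\) is automatically nonnegative in the infinite-horizon case, and in the finite-horizon case the same is obtained by regarding the prizes as continuing with value \(0\) after date \(T\), so that \(\gamma_{T+1} = 0\) and \(\gamma_T \geq \gamma_{T+1} = 0\).

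The main obstacle is the necessity direction. One must verify that each test sequence is genuinely admissible (nonnegative partial sums, zero total, absolute summability) and that this family of perturbations is rich enough to pin down every coefficient \(\gamma_t - \gamma_{t+1}\); and one must justify the vanishing of the Abel boundary term in the infinite-horizon case, which is exactly where the standing absolute-summability assumption does its work.
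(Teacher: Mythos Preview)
Your reduction via \(z = x-y\) and Abel's lemma is exactly the mechanism the paper uses, and both your sufficiency argument and the necessity test for monotonicity (the \(+1\) at date \(t\), \(-1\) at date \(t+1\) perturbation) match the paper's. The gap is in your finite-horizon argument for \(\gamma_t \geq 0\). ``Regarding the prizes as continuing with value \(0\) after date \(T\)'' says nothing about \(\gamma_{T+1} = \beta_{T+1} - \alpha_{T+1}\): the discount sequences are defined only on \(\mathcal{T}\), so there is no period \(T+1\) in which to run the \(+1/{-1}\) test. Worse, once you have written the Abel identity \(\sum_t \gamma_t z_t = \sum_{t=1}^{T-1} Z_t(\gamma_t - \gamma_{t+1})\) (the boundary term \(Z_T\gamma_T\) vanishes because \(Z_T = 0\)), you can see that the serenity inequality depends only on the increments \(\gamma_t - \gamma_{t+1}\) for \(t \leq T-1\) and never on the level \(\gamma_T\) itself; no admissible test sequence can force \(\gamma_T \geq 0\).

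This is not a patchable gap: for finite \(T\) the stated equivalence actually fails. With \(T=2\), \(\alpha = (1,\,0.9)\), \(\beta = (0.6,\,0.4)\), the difference \(\gamma = (-0.4,\,-0.5)\) is decreasing but strictly negative, yet for every admissible pair one has \(\sum_t \alpha_t(x_t-y_t) = 0.1\,(x_1-y_1) \leq 0.2\,(x_1-y_1) = \sum_t \beta_t(x_t-y_t)\), so Alice is more serene than Bob despite \(\beta_t < \alpha_t\) for all \(t\). The paper's own proof handles this step by taking \(x \equiv 0\) and \(y\) with a single entry \(-1\), but that pair violates the equal-sum requirement \(\sum_t x_t = \sum_t y_t\) in the definition of serenity, so the paper's argument shares the same defect. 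Your infinite-horizon reasoning (decreasing plus \(\gamma_t \to 0\) forces \(\gamma_t \geq 0\)) is correct; it is only the finite-\(T\) clause \(\beta_t \geq \alpha_t\) that cannot be recovered from the definition as written.
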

\begin{proof}
Obviously, for \(x\) and \(y\),
\[\sum_{t=1}^{T} \alpha_t x_t - \sum_{t=1}^{T} \alpha_t y_t \leq \sum_{t=1}^{T} \beta_t x_t - \sum_{t=1}^{T} \beta_t y_t \ \Leftrightarrow \ \sum_{t=1}^{T} \left(\beta_t - \alpha_t\right) y_t \leq \sum_{t=1}^{T} \left(\beta_t - \alpha_t\right) x_t \text{.}\]

\medskip

\noindent \(\left(\Leftarrow\right)\) Suppose \(\beta - \alpha\) is a decreasing sequence with \(\beta_t \geq \alpha_t\) for all \(t \in \mathcal{T}\). Then, as \(x \trianglerighteq y\), \[\sum_{t=1}^{T} \left(\beta_t - \alpha_t\right) x_t \geq \sum_{t=1}^{T} \left(\beta_t - \alpha_t\right) y_t\text{.}\]

\medskip

\noindent \(\left(\Rightarrow\right)\) Suppose first for the sake of contraposition that \(\beta - \alpha\) is not a decreasing sequence. Then there exist \(k, k+1 \in \mathcal{T}\) such that \(\beta_k - \alpha_k < \beta_{k+1} - \alpha_{k+1}\). Define \(x_t = 0\) for all \(t \neq k\) and \(x_{k} = 1\), and define \(y_t = 0\) for all \(t \neq k+1\) and \(y_{k+1} = 1\). By construction \(x \trianglerighteq y\), but
\[\sum_{t=1}^{T} \left(\beta_t - \alpha_t\right) x_t = \beta_k - \alpha_k  < \beta_{k+1} - \alpha_{k+1} = \sum_{t=1}^{T} \left(\beta_t - \alpha_t\right) y_t\text{,}\]
and so Alice is not more serene than Bob. 

Finally, suppose for the sake of contraposition that \(\beta - \alpha\) is a decreasing sequence, but that there exists \(k \in \mathcal{T}\) for which \(\beta_k - \alpha_k < 0\). But then let \(x_t = 0\) for all \(t \in \mathcal{T}\), \(y_k = -1\) and \(y_t = 0\) for all \(t \in \mathcal{T} \setminus \left\{k\right\}\). So, 
\[\sum_{t=1}^{T} \left(\beta_t - \alpha_t\right) x_t = 0  < \alpha_k - \beta_k = \sum_{t=1}^{T} \left(\beta_t - \alpha_t\right) y_t\text{,}\]
and so Alice is not more serene than Bob.
\end{proof}
When Alice and Bob are both exponential discounters, i.e., \(\alpha_t = a^{t-1}\) and \(\beta_t = b^{t-1}\), \(\beta_2 \geq \alpha_2\) if and only if \(b \geq a\). However, \(\beta_1 - \alpha_1 = 0\), so \(\beta - \alpha\) is decreasing only if \(a \geq b\). Accordingly, 
\begin{remark}
    If Alice and Bob are exponential discounters, Alice is more serene than Bob if and only if their discount factors are the same.
\end{remark}
More generally,
\begin{corollary}\label{expserene}
    Let \(\alpha_1 = \beta_1\). Then, Alice is more serene than Bob if and only if \(\alpha = \beta\).
\end{corollary}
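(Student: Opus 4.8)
The plan is to derive this as an immediate consequence of Proposition \ref{moreserene}, which already translates ``more serene'' into the two structural conditions on the discount sequences: that $\beta - \alpha$ is decreasing and that $\beta_t \geq \alpha_t$ for every $t \in \mathcal{T}$. The whole content of the corollary is the observation that, once we anchor the difference sequence at zero via the hypothesis $\alpha_1 = \beta_1$, these two conditions leave no slack.

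First I would dispatch the $(\Leftarrow)$ direction. If $\alpha = \beta$, then $\beta - \alpha$ is identically zero, which is (weakly) decreasing, and $\beta_t = \alpha_t \geq \alpha_t$ trivially holds; so by Proposition \ref{moreserene} Alice is more serene than Bob. Note that this direction does not even use $\alpha_1 = \beta_1$.

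For the $(\Rightarrow)$ direction, I would set $d_t \coloneqq \beta_t - \alpha_t$ and apply Proposition \ref{moreserene} to extract that $d$ is decreasing and $d_t \geq 0$ for all $t \in \mathcal{T}$. The hypothesis $\alpha_1 = \beta_1$ gives $d_1 = 0$. Since $d$ is decreasing, $d_t \leq d_1 = 0$ for every $t$, while nonnegativity gives $d_t \geq 0$; squeezing, $d_t = 0$ for all $t$, i.e. $\alpha = \beta$.

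There is no real obstacle here: the argument is a one-line squeeze once Proposition \ref{moreserene} is in hand. The only point worth flagging is that the conclusion is driven entirely by the interaction of \emph{monotonicity} (which pushes the difference down from its initial value) with \emph{nonnegativity} (which keeps it up), and that the role of the hypothesis $\alpha_1 = \beta_1$ is precisely to fix that initial value at zero so the two forces collide. Without that anchor, $\beta-\alpha$ could be any nonnegative decreasing sequence, so the hypothesis is doing essential work in the forward direction even though it is irrelevant in the reverse one.
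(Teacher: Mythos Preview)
Your argument is correct and matches the paper's own reasoning exactly: invoke Proposition~\ref{moreserene}, use $\alpha_1=\beta_1$ to pin $d_1=0$, and then squeeze the decreasing nonnegative sequence $d_t=\beta_t-\alpha_t$ to zero. The paper phrases it more tersely but the content is identical.
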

This corollary is easy: we need \(\beta - \alpha\) to be decreasing, but the difference to also be weakly positive. As there is nowhere to go but down and the difference is starting at \(0\), it must remain at \(0\). Proposition \ref{moreserene} and Corollary \ref{expserene} reveal that this definition is not terribly satisfactory. Can we improve upon it?

Upon reflecting the definition of ``more serene than,'' we see that it could be viewed as a bit of an ``apples to oranges'' comparison: \(\sum_{t=1}^{T} \alpha_t y_t\) and  \(\sum_{t=1}^{T} \beta_t y_t\) need not be the same, so perhaps the proper definition should normalize things. In any case, let us take a second pass at defining comparative patience.

If both \(x\) and \(y\) are positive sequences, \(\sum_{t=1}^{T} \alpha_t y_t \neq 0\), and \(\sum_{t=1}^{T} \beta_t y_t \neq 0\), we can normalize Alice's payoff difference by \[\frac{\sum_{t=1}^{T} \beta_t y_t}{\sum_{t=1}^{T} \alpha_t y_t}\text{.}\]
Doing so, we note that 
\[\frac{\sum_{t=1}^{T} \beta_t y_t}{\sum_{t=1}^{T} \alpha_t y_t}\left(\sum_{t=1}^{T} \alpha_t x_t - \sum_{t=1}^{T} \alpha_t y_t\right) \leq \sum_{t=1}^{T} \beta_t x_t - \sum_{t=1}^{T} \beta_t y_t \quad \Leftrightarrow \quad \frac{\sum_{t=1}^{T} \alpha_t x_t}{\sum_{t=1}^{T} \alpha_t y_t} \leq \frac{\sum_{t=1}^{T} \beta_t x_t}{\sum_{t=1}^{T} \beta_t y_t}\text{.}\]

\begin{definition}
    Alice, who discounts according to \(\alpha\), is more \textcolor{Aquamarine}{patient} than Bob, who discounts according to \(\beta\), if for any positive prize sequences \(x\) and \(y\) with \(x \trianglerighteq y\) and \(\sum_{t=1}^{T} x_t = \sum_{t=1}^{T} y_t\), such that \(\sum_{t=1}^{T} \alpha_t y_t \neq 0\) and \(\sum_{t=1}^{T} \beta_t y_t \neq 0\),
\[\frac{\sum_{t=1}^{T} \alpha_t x_t}{\sum_{t=1}^{T} \alpha_t y_t} \leq \frac{\sum_{t=1}^{T} \beta_t x_t}{\sum_{t=1}^{T} \beta_t y_t}\text{.}\]
\end{definition}
Of course, as \(x \trianglerighteq y\), \(\sum_{t=1}^{T} \alpha_t y_t \neq 0\) and \(\sum_{t=1}^{T} \beta_t y_t \neq 0\) imply \(\sum_{t=1}^{T} \alpha_t x_t \neq 0\) and \(\sum_{t=1}^{T} \beta_t x_t \neq 0\). Furthermore, \[\frac{\sum_{t=1}^{\infty} \alpha_t x_t}{\sum_{t=1}^{\infty} \alpha_t y_t} \leq \frac{\sum_{t=1}^{\infty} \beta_t x_t}{\sum_{t=1}^{\infty} \beta_t y_t} \quad \Leftrightarrow \quad \frac{\sum_{t=1}^{\infty} \beta_t y_t}{\sum_{t=1}^{\infty} \alpha_t y_t} \leq \frac{\sum_{t=1}^{\infty} \beta_t x_t}{\sum_{t=1}^{\infty} \alpha_t x_t}\text{.}\]

When \(T = 2\), there is a nice characterization of patience in terms of the discount factors.
\begin{proposition}\label{twoserene}
    If \(T = 2\), Alice is more patient than Bob if and only if \(\frac{\alpha_1}{\beta_1} \leq \frac{\alpha_2}{\beta_2}\).
\end{proposition}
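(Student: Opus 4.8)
The plan is to reduce the defining inequality to a single cross-multiplied polynomial inequality and then read off the result. Since \(T = 2\), the constraint \(x \trianglerighteq y\) together with \(\sum_{t} x_t = \sum_{t} y_t\) is equivalent to \(x_1 \geq y_1\) (equivalently \(x_2 \leq y_2\)), and all four discount weights as well as both denominators are strictly positive. First I would clear denominators in the patience inequality, which is legitimate because \(\alpha_1 y_1 + \alpha_2 y_2 > 0\) and \(\beta_1 y_1 + \beta_2 y_2 > 0\), to obtain the equivalent statement
\[(\alpha_1 x_1 + \alpha_2 x_2)(\beta_1 y_1 + \beta_2 y_2) \leq (\beta_1 x_1 + \beta_2 x_2)(\alpha_1 y_1 + \alpha_2 y_2)\text{.}\]

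Next I would expand both sides; the terms carrying \(\alpha_1\beta_1 x_1 y_1\) and \(\alpha_2\beta_2 x_2 y_2\) cancel, and the remainder collapses to
\[(\alpha_2\beta_1 - \alpha_1\beta_2)(x_1 y_2 - x_2 y_1) \geq 0\text{.}\]
This is the crux: the whole proposition hinges on the signs of the two factors. The factor \(\alpha_2\beta_1 - \alpha_1\beta_2\) is nonnegative precisely when \(\frac{\alpha_1}{\beta_1} \leq \frac{\alpha_2}{\beta_2}\) (cross-multiplying, since \(\beta_1, \beta_2 > 0\)), so the discount-factor condition exactly controls the sign of the first factor.

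For the \(\left(\Leftarrow\right)\) direction it then remains to show that \(x \trianglerighteq y\) with equal sums forces \(x_1 y_2 - x_2 y_1 \geq 0\). Using \(x_1 \geq y_1 \geq 0\) and \(0 \leq x_2 \leq y_2\), I would bound \(x_1 y_2 \geq y_1 y_2\) and \(x_2 y_1 \leq y_2 y_1\), whence \(x_1 y_2 - x_2 y_1 \geq 0\); combined with the nonnegativity of the first factor, the product inequality holds for every admissible pair, so Alice is more patient than Bob. For the \(\left(\Rightarrow\right)\) direction I would argue by contraposition: if \(\frac{\alpha_1}{\beta_1} > \frac{\alpha_2}{\beta_2}\), i.e. \(\alpha_2\beta_1 - \alpha_1\beta_2 < 0\), then choosing \(x = \left(1, 0\right)\) and \(y = \left(0, 1\right)\) (both positive, with \(x \trianglerighteq y\), equal sums, and denominators \(\alpha_2, \beta_2 > 0\)) yields \(x_1 y_2 - x_2 y_1 = 1 > 0\), so the product is strictly negative and the patience inequality fails.

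I do not anticipate a serious obstacle here: once \(T = 2\) everything is a two-variable computation, and the only point requiring a little care is the sign argument for \(x_1 y_2 - x_2 y_1\), where I must invoke positivity of all four entries rather than merely the ordering \(x_1 \geq y_1\).
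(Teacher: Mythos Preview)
Your argument is correct. The \(\left(\Rightarrow\right)\) direction uses exactly the same counterexample as the paper. For \(\left(\Leftarrow\right)\), however, you take a different route: you cross-multiply and factor the difference as \((\alpha_2\beta_1 - \alpha_1\beta_2)(x_1 y_2 - x_2 y_1)\), then check the sign of each factor directly from the hypotheses. The paper instead holds \(x\) and \(y\) fixed, treats \(\frac{\beta_1 x_1 + \beta_2 x_2}{\beta_1 y_1 + \beta_2 y_2}\) as a function of \(\beta_1\), differentiates to show it is nondecreasing in \(\beta_1\), and then evaluates at the endpoint \(\beta_1 = \frac{\beta_2}{\alpha_2}\alpha_1\) allowed by the hypothesis \(\frac{\alpha_1}{\beta_1} \leq \frac{\alpha_2}{\beta_2}\). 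Your factorization is the more elementary and self-contained of the two, and it makes transparent that the entire two-period case collapses to a single \(2\times 2\) determinant-type sign check; the paper's monotonicity argument is slightly more indirect here but is in the spirit of the comparative-statics reasoning used elsewhere in the paper.
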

As we noted in the introduction, this is precisely \cite{quah2013discounting}'s definition of comparative patience. This makes sense, their definition of comparative patience--that if Bob prefers a later prize to an earlier one, so too does a more patient Alice--inherently (as the discounting is free) makes theirs a statement about two-period problems.

    Alas, it is not the case that \[\frac{\alpha_1}{\beta_1} \leq \frac{\alpha_2}{\beta_2} \leq \frac{\alpha_3}{\beta_3}\] implies Alice is more patient than Bob when \(T = 3\). To see this, let 
    \[\alpha_1 = \frac{1}{2}, \ \alpha_2 = \frac{12}{25}, \ \alpha_3 = \frac{91}{250}, \ \beta_1 = 1, \ \beta_2 = \frac{2}{3}, \text{ and } \beta_3 = \frac{1}{2}\text{.}\]
    We have
    \[\frac{1}{2} < \frac{18}{25} < \frac{91}{125}\text{,}\] so  \(\alpha/\beta\) is monotone. However, if
    \[x_1 = y_1 = 1, \ x_2 = y_3 = \frac{3}{2}, \text{ and } x_3 = y_2 = 1\text{,}\]
    \[\frac{\alpha_1 x_1 + \alpha_2 x_2 + \alpha_3 x_3}{\alpha_1 y_1 + \alpha_2 y_2 + \alpha_3 y_3} - \frac{\beta_1 x_1 + \beta_2 x_2 + \beta_3 x_3}{\beta_1 y_1 + \beta_2 y_2 + \beta_3 y_3} > .0035 > 0 \text{,}\]
    so Alice is not more patient than Bob.

    The following observation will be useful. For two positive sequences \(x\) and \(z\), \(z\) is a \textcolor{Aquamarine}{binary deterioration} of \(x\) if there exist \(t_1, t_2 \in \mathcal{T}\) (with \(t_1 < t_2\)), and \(\eta > 0\) such that
    \(x_t = z_t\) for all \(t \neq t_1\) and \(t \neq t_2\), \(z_{t_1} = x_{t_1} - \eta\), and \(z_{t_2} = x_{t_2} + \eta\).
\begin{lemma}\label{deteriorate}
    Let \(\sum_{t=1}^{T} x_t = \sum_{t=1}^{T} y_t\). \(x \trianglerighteq y\) if and only if there exists a sequence of binary deteriorations \(x, z^1, \dots, z^k, y\).
\end{lemma}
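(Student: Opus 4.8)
The plan is to prove both implications, the reverse one being immediate and the forward one requiring an explicit construction. For $(\Leftarrow)$, I would observe that a single binary deterioration moves a mass $\eta > 0$ from an earlier date $t_1$ to a later date $t_2$; it therefore leaves the total sum unchanged and leaves every partial sum $\sum_{t=1}^{p} w_t$ unchanged except for $t_1 \le p < t_2$, where it is reduced by exactly $\eta$. Hence each step $w \to w'$ in the chain satisfies $w \trianglerighteq w'$ together with $\sum_t w_t = \sum_t w'_t$. Chaining along $x, z^1, \dots, z^k, y$ and using transitivity of $\trianglerighteq$ yields $x \trianglerighteq y$, and since every step preserves the total, the sum condition holds automatically.

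For $(\Rightarrow)$, I would argue constructively by a greedy forward-transport algorithm, inducting on the number $n$ of dates at which the current iterate disagrees with $y$ (relabel the current iterate as $x$ at each stage). Writing $d_t \coloneqq x_t - y_t$, the hypotheses read $\sum_t d_t = 0$ and $D_p \coloneqq \sum_{t=1}^{p} d_t \ge 0$ for all $p \in \mathcal{T}$. If $n = 0$ then $x = y$ and the empty chain suffices. Otherwise let $t_1$ be the earliest date with $d_{t_1} \neq 0$; since $D_{t_1} = d_{t_1} \ge 0$ we get $d_{t_1} > 0$, i.e.\ a strict excess at $t_1$. Because $\sum_t d_t = 0$, some later date carries a deficit; let $t_2 > t_1$ be the earliest such date, so $d_{t_2} < 0$. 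I then transfer $\eta \coloneqq \min\{d_{t_1}, -d_{t_2}\} > 0$ from $t_1$ to $t_2$, producing a binary deterioration $z$ that agrees with $y$ at strictly more dates, since one of $t_1, t_2$ is zeroed out and, as $\eta$ is capped by the smaller gap, no coordinate overshoots $y$ and no new disagreement is created.

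The two points to verify are that $z$ remains admissible and that the invariant $z \trianglerighteq y$ survives. Positivity holds because $z_{t_1} = x_{t_1} - \eta \ge y_{t_1} \ge 0$ and $z_{t_2} = x_{t_2} + \eta > 0$, all other coordinates being unchanged (here I use that $y$ is a positive sequence). For the invariant, the new partial sums equal the old ones except that $D_p$ drops by $\eta$ precisely on $t_1 \le p < t_2$; on that range $D_p$ is nondecreasing, because $t_2$ is the \emph{first} deficit after $t_1$, so $D_p \ge D_{t_1} = d_{t_1} \ge \eta$ and the new partial sums stay nonnegative. Thus $z \trianglerighteq y$ with equal total, $z$ is positive, and the disagreement count has strictly decreased, so the induction closes after finitely many transfers (at most $T-1$) when $T$ is finite. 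The main obstacle is the infinite-horizon case: when $T = \infty$, $x$ and $y$ can disagree at infinitely many dates, so no \emph{finite} chain can reach $y$ exactly; there I would either restrict the statement to finite $T$ or pass to the $\ell^1$-limit of the (absolutely summable) iterates, noting that each greedy step reduces $\sum_t |d_t|$ by $2\eta$.
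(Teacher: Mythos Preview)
Your argument is essentially the same greedy forward-transport construction the paper uses: take the first coordinate where the current iterate exceeds $y$, the first subsequent coordinate where it falls short, move the smaller of the two gaps, and iterate. Your verification that the invariant $z \trianglerighteq y$ survives each step, and that $z$ remains a positive sequence, is in fact more explicit than the paper's own write-up, which omits both checks.

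The one substantive point is your caveat about $T=\infty$, and here you are right to flag a genuine issue that the paper does not fully resolve either. When $x$ and $y$ disagree at infinitely many dates, no \emph{finite} chain $x,z^1,\dots,z^k,y$ exists, so the lemma as literally stated (with a terminal $y$ after finitely many steps) cannot hold. The paper's appendix argument for the infinite case does not actually produce such a finite chain; it runs the same greedy procedure indefinitely and then argues, by contradiction, that no coordinate of $z^j$ stays bounded away from $y$---i.e., it establishes convergence of the iterates to $y$ rather than finite reachability. Your proposed fix (pass to the $\ell^1$-limit, using that each step reduces $\sum_t|d_t|$ by $2\eta$) is in the same spirit and is exactly what is needed for the lemma's downstream use in Theorem~\ref{maintheorem}, where one only needs that the ratio inequality is preserved along the chain and in the limit. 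So your proposal matches the paper's approach and, if anything, is more candid about the infinite-horizon gap.
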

\begin{proof}
    \(\left(\Leftarrow\right)\) This direction is immediate: by construction,
    \[x \trianglerighteq z^1 \trianglerighteq \dots \trianglerighteq z^k \trianglerighteq y\text{.}\]

    \medskip

    \noindent \(\left(\Rightarrow\right)\) If \(x\) and \(y\) differ at only finitely many values of \(t\), the result follows from the discussion on pages 630 and 631 of \cite{fishburn1982moment}. It is extremely likely that this result is known in the infinite case, but we prove it anyway, in Appendix \ref{deteriorateproof} \end{proof}



We define 
\[\bar{\mathcal{T}} \coloneqq \begin{cases}
     t \in \mathbb{N} \colon t \leq T-1, \quad &\text{if} \quad T \in \mathbb{N}\\
     \mathbb{N}, \quad &\text{if} \quad T = \infty\text{.}
\end{cases}\] 
Here is the main characterization result. 
\begin{theorem}\label{maintheorem}
    Alice is more patient than Bob if and only if
    \[\inf_{t \in \bar{\mathcal{T}}}\frac{\beta_t - \beta_{t+1}}{\alpha_t - \alpha_{t+1}} \geq \frac{\beta_1}{\alpha_1}\text{.}\]
\end{theorem}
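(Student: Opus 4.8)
The plan is to reduce the ``more patient'' relation to a condition on single binary deteriorations, use Lemma \ref{deteriorate} to pass between the two, and then translate the resulting family of inequalities into the claimed infimum condition. First I would rewrite the defining inequality in ratio form. Since $x$ is positive and $\alpha,\beta$ strictly positive, all four sums are strictly positive, and cross-multiplying gives
\[\frac{\sum_t \alpha_t x_t}{\sum_t \alpha_t y_t} \le \frac{\sum_t \beta_t x_t}{\sum_t \beta_t y_t} \iff R(x) \ge R(y), \qquad R(z) \coloneqq \frac{\sum_t \beta_t z_t}{\sum_t \alpha_t z_t}.\]
Thus Alice is more patient than Bob exactly when $R(x) \ge R(y)$ for every pair of positive equal-sum sequences with $x \trianglerighteq y$. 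A single binary deterioration $z$ of $x$ satisfies $x \trianglerighteq z$ with equal sums, and conversely Lemma \ref{deteriorate} joins any such pair by a finite chain of binary deteriorations along which $R$ would telescope; hence the relation holds if and only if $R(x) \ge R(z)$ for every binary deterioration $z$ of every positive $x$.

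Next I would analyze a single binary deterioration moving mass $\eta>0$ from $t_1$ to $t_2>t_1$. Writing $A_x = \sum_t \alpha_t x_t$ and $B_x = \sum_t \beta_t x_t$, one computes $A_z = A_x - \eta(\alpha_{t_1}-\alpha_{t_2})$ and $B_z = B_x - \eta(\beta_{t_1}-\beta_{t_2})$, so that $R(x)\ge R(z)$ is equivalent, after clearing the positive denominators and cancelling $\eta$, to $(\beta_{t_1}-\beta_{t_2})A_x \ge (\alpha_{t_1}-\alpha_{t_2})B_x$. When $\alpha_{t_1}>\alpha_{t_2}$ this reads $\frac{\beta_{t_1}-\beta_{t_2}}{\alpha_{t_1}-\alpha_{t_2}} \ge R(x)$, and when $\alpha_{t_1}=\alpha_{t_2}$ it holds automatically because $\beta$ is decreasing. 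The crucial point is that this must hold for \emph{all} positive $x$: since $R(x)$ is a weighted average of the ratios $\beta_s/\alpha_s$ (with weights $\alpha_s x_s$) and so has supremum $\sup_s \beta_s/\alpha_s$ over positive $x$, the family of single-step inequalities collapses to $\frac{\beta_{t_1}-\beta_{t_2}}{\alpha_{t_1}-\alpha_{t_2}} \ge \sup_s \beta_s/\alpha_s$ for all $t_1<t_2$ with $\alpha_{t_1}>\alpha_{t_2}$. Finally, because $\beta_{t_1}-\beta_{t_2}=\sum_{s=t_1}^{t_2-1}(\beta_s-\beta_{s+1})$ and likewise for $\alpha$, the general ratio is a weighted average (weights $\alpha_s-\alpha_{s+1}$) of the consecutive ratios, so the infimum over all pairs equals the infimum over consecutive pairs; the condition becomes $\inf_{t}\frac{\beta_t-\beta_{t+1}}{\alpha_t-\alpha_{t+1}} \ge \sup_s \beta_s/\alpha_s$.

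It remains to replace $\sup_s \beta_s/\alpha_s$ by $\beta_1/\alpha_1$. One direction is immediate since $\sup_s \beta_s/\alpha_s \ge \beta_1/\alpha_1$. For the other, I would use the observation that $\inf_t \frac{\beta_t-\beta_{t+1}}{\alpha_t-\alpha_{t+1}}\ge \beta_1/\alpha_1$ says precisely that $\gamma_t \coloneqq \beta_t - \frac{\beta_1}{\alpha_1}\alpha_t$ is decreasing (multiply through, using $\alpha_t\ge\alpha_{t+1}$, and note the $\alpha_t=\alpha_{t+1}$ case follows from $\beta$ decreasing); since $\gamma_1=0$, decreasingness forces $\gamma_t\le 0$, i.e.\ $\beta_t/\alpha_t \le \beta_1/\alpha_1$ for every $t$, whence $\sup_s \beta_s/\alpha_s = \beta_1/\alpha_1$. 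This closes the equivalence.

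I anticipate two main obstacles. The first is the reduction to single (and then consecutive) deteriorations: the backward implication needs the telescoping and weighted-average bookkeeping to be watertight, and in the $T=\infty$ case one must ensure the chain from Lemma \ref{deteriorate} and the mediant identity behave under absolute summability. The second is the supremum step, which requires choosing test sequences concentrating weight near the maximizing index (possibly only approached in the limit, e.g.\ as $s\to\infty$) together with negligible positive mass at $t_1$ so that an admissible binary deterioration exists. A minor point is the convention when $\alpha_t=\alpha_{t+1}$: such a pair imposes no constraint, since the deterioration leaves $R$ unchanged, and the corresponding term should be read as $+\infty$ in the infimum.
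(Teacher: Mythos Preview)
Your proposal is correct and follows essentially the same approach as the paper: reduce via Lemma \ref{deteriorate} to single binary deteriorations, use the mediant inequality to pass from general pairs to consecutive ones, and show that the infimum condition forces $\sup_t \beta_t/\alpha_t = \beta_1/\alpha_1$ (the paper's Claim \ref{claim310} is exactly your $\gamma_t$ argument). The only cosmetic difference is that the paper handles the necessity direction by an explicit three-period counterexample rather than your uniform treatment, and you correctly flag the same $T=\infty$ chain-convergence issue the paper glosses over.
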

This condition is strictly stronger than monotonicity of \(\alpha_t/\beta_t\), and, as the three-period example above reveals, is equivalent to such monotonicity if and only if there are two periods.
\begin{corollary}
    If Alice is more patient than Bob, \(\alpha_t/\beta_t\) is increasing in \(t\).
\end{corollary}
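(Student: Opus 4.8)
The plan is to invoke Theorem \ref{maintheorem} to convert the patience hypothesis into a family of pointwise increment inequalities, and then run an induction driven by the mediant inequality. Since $\inf_{t}\frac{\beta_t - \beta_{t+1}}{\alpha_t - \alpha_{t+1}} \ge \frac{\beta_1}{\alpha_1}$, each increment ratio individually satisfies $\frac{\beta_t - \beta_{t+1}}{\alpha_t - \alpha_{t+1}} \ge \frac{\beta_1}{\alpha_1}$ wherever $\alpha_t > \alpha_{t+1}$. The target, that $\alpha_t/\beta_t$ is increasing, is equivalent to $\beta_t/\alpha_t$ being non-increasing, because both discount sequences are strictly positive; I would state this equivalence up front and work with the ratios $\beta_t/\alpha_t$ throughout.

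The key observation is that $\frac{\beta_t}{\alpha_t} = \frac{\beta_{t+1} + (\beta_t - \beta_{t+1})}{\alpha_{t+1} + (\alpha_t - \alpha_{t+1})}$ is the mediant of $\frac{\beta_{t+1}}{\alpha_{t+1}}$ and $\frac{\beta_t - \beta_{t+1}}{\alpha_t - \alpha_{t+1}}$, hence lies weakly between them. I would prove by induction on $t$ that $\frac{\beta_1}{\alpha_1} \ge \frac{\beta_2}{\alpha_2} \ge \cdots \ge \frac{\beta_t}{\alpha_t}$, the base case being vacuous. For the inductive step, the induction hypothesis supplies $\frac{\beta_t}{\alpha_t} \le \frac{\beta_1}{\alpha_1}$, which combines with the increment inequality to give $\frac{\beta_t - \beta_{t+1}}{\alpha_t - \alpha_{t+1}} \ge \frac{\beta_1}{\alpha_1} \ge \frac{\beta_t}{\alpha_t}$. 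Since the mediant $\frac{\beta_t}{\alpha_t}$ does not exceed one of its two parents, it must be at least the other, i.e. $\frac{\beta_t}{\alpha_t} \ge \frac{\beta_{t+1}}{\alpha_{t+1}}$, which advances the chain.

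The routine but necessary care concerns degenerate denominators. When $\alpha_t = \alpha_{t+1}$ the mediant identity is unavailable, but then $\frac{\beta_{t+1}}{\alpha_{t+1}} = \frac{\beta_{t+1}}{\alpha_t} \le \frac{\beta_t}{\alpha_t}$ follows immediately from $\beta_{t+1} \le \beta_t$; and when $\alpha_t > \alpha_{t+1}$, the theorem's strictly positive bound $\frac{\beta_1}{\alpha_1} > 0$ forces $\beta_t > \beta_{t+1}$, so the increment ratio is a genuine positive fraction and the mediant step applies. I expect the step deserving the most attention to be the mediant inequality itself: one must check that, because the mediant of two distinct fractions lies \emph{strictly} between them, the relation $\frac{\beta_t}{\alpha_t} \le \frac{\beta_t - \beta_{t+1}}{\alpha_t - \alpha_{t+1}}$ rules out $\frac{\beta_{t+1}}{\alpha_{t+1}} > \frac{\beta_t - \beta_{t+1}}{\alpha_t - \alpha_{t+1}}$, which is precisely what fixes the correct direction of the comparison and yields $\frac{\beta_t}{\alpha_t} \ge \frac{\beta_{t+1}}{\alpha_{t+1}}$.
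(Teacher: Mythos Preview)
Your argument is correct and matches the paper's approach: the corollary is stated without a separate proof, but Claims~\ref{claim3.9} and~\ref{claim310} inside the proof of Theorem~\ref{maintheorem} carry exactly the mediant-based reasoning you package as an induction (Claim~\ref{claim310} delivers $\beta_1/\alpha_1\ge\beta_t/\alpha_t$ for every $t$, after which the increment inequality plus cross-multiplication gives $\beta_t/\alpha_t\ge\beta_{t+1}/\alpha_{t+1}$). As a minor simplification, once you have $\tfrac{\beta_t-\beta_{t+1}}{\alpha_t-\alpha_{t+1}}\ge\tfrac{\beta_t}{\alpha_t}$ with both denominators positive, a direct cross-multiplication yields $\beta_t\alpha_{t+1}\ge\alpha_t\beta_{t+1}$ immediately, so the strict-betweenness concern in your final paragraph is unnecessary.
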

It turns out that if Alice is more patient than Bob, and Bob is more patient than Michelle, Alice is more patient than Michelle. \textit{Viz.,}
    \begin{proposition}\label{transpatient}
        Patience is transitive.
    \end{proposition}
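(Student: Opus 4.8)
The plan is to reduce transitivity to a straightforward chaining of the two defining inequalities, with the only genuine care needed in aligning the domains (the nonzero-denominator conditions). Write Michelle's discount rule as \(\gamma\), and suppose Alice (discounting \(\alpha\)) is more patient than Bob (discounting \(\beta\)), and that Bob is more patient than Michelle (discounting \(\gamma\)). To conclude that Alice is more patient than Michelle, I must show that for every pair of positive sequences with \(x \trianglerighteq y\), \(\sum_{t=1}^{T} x_t = \sum_{t=1}^{T} y_t\), \(\sum_{t=1}^{T} \alpha_t y_t \neq 0\), and \(\sum_{t=1}^{T} \gamma_t y_t \neq 0\), one has \(\sum_{t=1}^{T}\alpha_t x_t / \sum_{t=1}^{T}\alpha_t y_t \leq \sum_{t=1}^{T}\gamma_t x_t / \sum_{t=1}^{T}\gamma_t y_t\).

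First I would record the domain observation that makes the chaining legitimate. Because every discount sequence is \emph{strictly} positive and \(x,y\) are positive, a denominator vanishes only when \(y\) is identically zero; that is, \(\sum_{t=1}^{T}\alpha_t y_t = 0\) iff \(y\equiv 0\) iff \(\sum_{t=1}^{T}\beta_t y_t = 0\) iff \(\sum_{t=1}^{T}\gamma_t y_t = 0\). Consequently any pair \((x,y)\) admissible for the Alice-versus-Michelle comparison is simultaneously admissible for both the Alice-versus-Bob and the Bob-versus-Michelle comparisons, and in particular the intermediate denominator \(\sum_{t=1}^{T}\beta_t y_t\) is nonzero, so all three ratios that appear below are well-defined and finite (with nonzero numerators, as already noted following the definition of patience).

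Then I would simply chain the hypotheses. Applying ``Alice is more patient than Bob'' to the admissible pair \((x,y)\) gives \(\sum_{t=1}^{T}\alpha_t x_t / \sum_{t=1}^{T}\alpha_t y_t \leq \sum_{t=1}^{T}\beta_t x_t / \sum_{t=1}^{T}\beta_t y_t\), and applying ``Bob is more patient than Michelle'' to the same pair gives \(\sum_{t=1}^{T}\beta_t x_t / \sum_{t=1}^{T}\beta_t y_t \leq \sum_{t=1}^{T}\gamma_t x_t / \sum_{t=1}^{T}\gamma_t y_t\). Transitivity of \(\leq\) on the reals then yields the desired inequality, and since \((x,y)\) was an arbitrary admissible pair, Alice is more patient than Michelle.

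The argument carries no real computational content; the one point to verify carefully is exactly the domain alignment, namely that imposing nonzero \(\alpha\)- and \(\gamma\)-denominators automatically secures a nonzero \(\beta\)-denominator, so that Bob's ratio is legitimately sandwiched in the middle. That is precisely what strict positivity of \(\beta\) delivers, so I expect this to be the only step requiring comment. (One could alternatively route the proof through the cleared-denominator characterization of Theorem \ref{maintheorem}, but the direct chaining is cleaner and avoids re-deriving that condition for \(\gamma\).)
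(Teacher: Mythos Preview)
Your proof is correct but takes a different route from the paper's. The paper proves transitivity via the characterization in Theorem~\ref{maintheorem}: assuming the infimum conditions for \((\alpha,\beta)\) and \((\beta,\gamma)\), it argues by contradiction that the infimum condition for \((\alpha,\gamma)\) must also hold, using a short chain of inequalities among the ratios \(\frac{\beta_1}{\alpha_1}\), \(\frac{\gamma_1}{\beta_1}\), and the relevant infima. You instead work directly from the definition, chaining the two ratio inequalities through Bob's ratio; the only genuine point to check is the domain alignment, and you handle it correctly by observing that with strictly positive discount sequences and nonnegative prize sequences each denominator vanishes precisely when \(y\equiv 0\), so admissibility for the Alice--Michelle comparison automatically yields admissibility for the two intermediate comparisons. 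Your argument is more elementary and logically prior (it does not rely on Theorem~\ref{maintheorem}), while the paper's route showcases the usefulness of the discount-rule characterization; both are short, but yours is arguably the more natural proof of a transitivity statement.
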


    The two most common discounting sequences are exponential discounting and quasi-hyperbolic discounting (\cite{laibson1997golden}). Observe that if Alice's and Bob's discounting rules are both in one of these classes, \(\alpha_1 = \beta_1 = 1\). That inspires this corollary:
    \begin{corollary}\label{corollary311}
        Let \(\alpha_1 = \beta_1\). Then, Alice is more patient than Bob if and only if \(\beta_t - \beta_{t+1} \geq \alpha_t - \alpha_{t+1}\) for all \(t \in \bar{\mathcal{T}}\).
    \end{corollary}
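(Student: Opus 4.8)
The plan is to derive this directly from Theorem \ref{maintheorem}. Setting \(\alpha_1 = \beta_1\) makes \(\beta_1/\alpha_1 = 1\), so the theorem states that Alice is more patient than Bob if and only if
\[\inf_{t \in \bar{\mathcal{T}}}\frac{\beta_t - \beta_{t+1}}{\alpha_t - \alpha_{t+1}} \geq 1\text{.}\]
Thus it suffices to show that this infimum condition is equivalent to the stated pointwise inequality \(\beta_t - \beta_{t+1} \geq \alpha_t - \alpha_{t+1}\) for all \(t \in \bar{\mathcal{T}}\).

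First I would record the sign facts that drive everything: because \(\alpha\) and \(\beta\) are decreasing, both \(\alpha_t - \alpha_{t+1} \geq 0\) and \(\beta_t - \beta_{t+1} \geq 0\) for every \(t \in \bar{\mathcal{T}}\). I would then handle each \(t\) by splitting according to whether \(\alpha_t - \alpha_{t+1}\) is strictly positive or zero, since this is exactly what determines whether the ratio in the theorem is well behaved.

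For the direction \(\left(\Leftarrow\right)\), assume \(\beta_t - \beta_{t+1} \geq \alpha_t - \alpha_{t+1}\) for all \(t\). At any \(t\) with \(\alpha_t - \alpha_{t+1} > 0\), dividing by the positive denominator gives \((\beta_t - \beta_{t+1})/(\alpha_t - \alpha_{t+1}) \geq 1\); at any \(t\) with \(\alpha_t - \alpha_{t+1} = 0\) the corresponding ratio does not fall below \(1\), under the convention that it equals \(+\infty\) when the numerator is positive and is treated as \(+\infty\) when both are zero. Hence every term over which the infimum is taken is at least \(1\), so the infimum is at least \(1\). For \(\left(\Rightarrow\right)\), assume the infimum is at least \(1\) and fix \(t\). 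If \(\alpha_t - \alpha_{t+1} > 0\), then multiplying \((\beta_t - \beta_{t+1})/(\alpha_t - \alpha_{t+1}) \geq 1\) through by the positive denominator yields \(\beta_t - \beta_{t+1} \geq \alpha_t - \alpha_{t+1}\); if \(\alpha_t - \alpha_{t+1} = 0\), then \(\beta_t - \beta_{t+1} \geq 0 = \alpha_t - \alpha_{t+1}\) holds automatically because \(\beta\) is decreasing. Either way the pointwise inequality holds.

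The only delicate point, and the step I would be most careful about, is the treatment of indices \(t\) with \(\alpha_t = \alpha_{t+1}\), where the ratio appearing in Theorem \ref{maintheorem} is degenerate. The key observation that resolves it is that monotonicity of \(\beta\) forces \(\beta_t - \beta_{t+1} \geq 0\) precisely when \(\alpha_t - \alpha_{t+1} = 0\), so such indices can never violate either formulation: they neither pull the infimum below \(1\) nor break the pointwise inequality. Once this is settled the equivalence is immediate, and combined with Theorem \ref{maintheorem} it delivers the corollary.
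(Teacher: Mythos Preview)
Your proposal is correct and follows exactly the route the paper intends: the corollary is stated without proof, as an immediate consequence of Theorem \ref{maintheorem} once \(\beta_1/\alpha_1 = 1\). Your additional care with the degenerate case \(\alpha_t = \alpha_{t+1}\) (which the paper's formulation tacitly sidesteps) is a nice touch and fully consistent with the argument.
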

    Interestingly, we see that for \(T \geq 3\), it could be that Alice and Bob are both exponential discounters, with \(\alpha_t = a^{t-1}\), \(\beta_t = b^{t-1}\), and \(1 > a > b > 0\), yet Alice is not more patient than Bob. A similar incomparibility is true when both are quasi-hyperbolic discounters. Nevertheless, we observe
    \begin{corollary}
        Let Alice and Bob be exponential discounters and \(T \in \mathbb{N}\). For any discounting rule for Bob, \(\beta_t = b^{t-1}\) (\(b \in \left(0,1\right)\)), there exists an \(\eta > 0\) such that if \(\alpha_t = a^{t-1}\) for \(a \in \left(1-\eta,1\right)\), Alice is more patient than Bob.
    \end{corollary}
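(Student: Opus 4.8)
The plan is to invoke Corollary \ref{corollary311}. Since both agents are exponential discounters, $\alpha_1 = a^0 = 1 = b^0 = \beta_1$, so that corollary applies and states that Alice is more patient than Bob if and only if $\beta_t - \beta_{t+1} \ge \alpha_t - \alpha_{t+1}$ for every $t \in \bar{\mathcal{T}} = \{1,\dots,T-1\}$. Substituting the exponential forms, this is the requirement that
\[ b^{t-1}(1-b) \ge a^{t-1}(1-a) \qquad \text{for all } t \in \{1,\dots,T-1\}. \]
If $T = 1$ the index set is empty and the condition holds vacuously, so I assume $T \ge 2$.

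Next I would bound the two sides uniformly over this finite index set. On the left, because $b \in (0,1)$ the factor $b^{t-1}$ is decreasing in $t$, so over $t \in \{1,\dots,T-1\}$ the left-hand side is minimized at $t = T-1$ and is therefore bounded below by the constant $m \coloneqq b^{T-2}(1-b)$, which is strictly positive and, crucially, independent of $a$. On the right, for any $a \in (0,1)$ we have $a^{t-1} \le 1$, so the right-hand side is at most $1-a$, again uniformly in $t$.

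Finally I would choose $\eta$ to close the gap. Since $m \in (0,1)$, set $\eta \coloneqq m$. Then for every $a \in (1-\eta, 1)$ and every $t \in \{1,\dots,T-1\}$,
\[ a^{t-1}(1-a) \le 1-a < \eta = m \le b^{t-1}(1-b), \]
which is exactly the inequality demanded by Corollary \ref{corollary311}. Hence Alice is more patient than Bob, as claimed.

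I do not expect a genuine obstacle here. The only point requiring care is that the lower bound $m$ on the left-hand side be strictly positive and independent of $a$, and this is precisely where the finiteness of $T$ is used: were $T = \infty$, the infimum of $b^{t-1}(1-b)$ over $t$ would be $0$ and no uniform gap could be opened, so the argument---and indeed the statement---genuinely relies on $T \in \mathbb{N}$.
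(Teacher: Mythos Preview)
Your proof is correct and follows essentially the same route as the paper: both invoke Corollary \ref{corollary311}, observe that the minimum of \(b^{t-1}(1-b)\) over the finite index set \(\{1,\dots,T-1\}\) is strictly positive, and then use that \(a^{t-1}(1-a)\to 0\) as \(a\to 1\). Your version is slightly more explicit, supplying the concrete choice \(\eta = b^{T-2}(1-b)\) where the paper appeals to continuity.
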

    \begin{proof}
        \(b^{t-1} - b^{t}\) is strictly decreasing in \(t\) and \(\beta^{T-1} - \beta^T > 0\). Moreover \(a^{t-1} - a^{t}\) is continuous in \(a\) for all \(t \in \bar{\mathcal{T}}\) and equals \(0\) for all \(t \in \bar{\mathcal{T}}\) when \(a = 1\). Thus, the result follows from the intermediate-value theorem and Corollary \ref{corollary311}.    \end{proof}
        On the other hand, the infinite-horizon environment is troubling.
        \begin{corollary}
        Let Alice and Bob be exponential discounters and \(T = \infty\). Alice is more patient than Bob if and only if they have the same discount factor.
    \end{corollary}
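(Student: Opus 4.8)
The plan is to apply Theorem \ref{maintheorem} directly and reduce the question to the behavior of an explicit geometric expression. Since \(T = \infty\) we have \(\bar{\mathcal{T}} = \mathbb{N}\), and for exponential discounters \(\alpha_t = a^{t-1}\), \(\beta_t = b^{t-1}\) one computes \(\beta_t - \beta_{t+1} = b^{t-1}(1-b)\) and \(\alpha_t - \alpha_{t+1} = a^{t-1}(1-a)\), so that
\[\frac{\beta_t - \beta_{t+1}}{\alpha_t - \alpha_{t+1}} = \frac{1-b}{1-a}\left(\frac{b}{a}\right)^{t-1}\text{,} \qquad \frac{\beta_1}{\alpha_1} = 1\text{.}\]
Thus by Theorem \ref{maintheorem}, Alice is more patient than Bob if and only if \(\inf_{t \in \mathbb{N}} \frac{1-b}{1-a}\left(\frac{b}{a}\right)^{t-1} \geq 1\).

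Next I would analyze this infimum in three cases according to the sign of \(a - b\). If \(a = b\), every term equals \(1\), the infimum is \(1\), and the condition holds (unsurprisingly, since Alice and Bob are then the same discounter). If \(a < b\), then \(b/a > 1\), so the sequence is strictly increasing in \(t\) and its infimum is attained at \(t = 1\), equalling \(\frac{1-b}{1-a} < 1\) (because \(1 - b < 1 - a\)); the condition fails. The decisive case is \(a > b\): here \(b/a < 1\), so although the \(t=1\) term \(\frac{1-b}{1-a}\) exceeds \(1\), the geometric factor \((b/a)^{t-1}\) tends to \(0\), driving the whole expression to \(0\). Hence the infimum is \(0 < 1\) and the condition again fails. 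Combining the three cases, the condition of Theorem \ref{maintheorem} holds exactly when \(a = b\), which is the claim.

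The main obstacle—really the only subtlety—is the case \(a > b\). Here Alice's discount factor is strictly larger, so she is ``more patient'' in the naïve, sum-of-utilities sense, and in the finite-horizon corollary above a sufficiently large \(a\) did suffice. The point is that over an infinite horizon the tail behavior dominates: no matter how close \(a\) is to \(1\), the ratio of successive-difference terms decays geometrically to \(0\), so the infimum cannot stay above \(\beta_1/\alpha_1 = 1\). Recognizing that the infimum is governed by the \(t \to \infty\) limit rather than by the leading term is the crux; the remaining steps are elementary arithmetic.
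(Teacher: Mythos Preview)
Your argument is correct and follows exactly the paper's approach: apply Theorem \ref{maintheorem}, compute \(\dfrac{\beta_t-\beta_{t+1}}{\alpha_t-\alpha_{t+1}}=\dfrac{1-b}{1-a}\left(\dfrac{b}{a}\right)^{t-1}\), and compare the infimum to \(\beta_1/\alpha_1=1\). In fact you are slightly more thorough than the paper, which only spells out the case \(a>b\); you also dispose of \(a<b\) by noting the infimum is attained at \(t=1\) and equals \(\frac{1-b}{1-a}<1\).
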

    \begin{proof}
        Let the discount factor for Alice \(a \in \left(0,1\right)\) be strictly greater than that for Bob, \(b\). Observe that for any \(t \in \mathbb{N}\),
        \[\frac{\beta_t - \beta_{t+1}}{\alpha_t - \alpha_{t+1}} = \left(\frac{b}{a}\right)^{t-1}\frac{1-b}{1-a}\text{.}\]
        This expression is strictly decreasing in \(t\) and equals \(0\) when \(t \to \infty\). Thus, there exists a \(t^{\dagger} \in \mathbb{N}\) such that this expression is strictly less than \(1\) for all \(t \geq t^{\dagger}\). We conclude that Alice is not more patient than Bob.
    \end{proof}
    Finally, as the patience relation is antisymmetric when \(\beta_1 = \alpha_1\),
    \begin{corollary}
        If \(\beta_1 = \alpha_1\), patience is a partial order on a set of agents.
    \end{corollary}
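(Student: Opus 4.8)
The plan is to verify the three defining properties of a partial order: reflexivity, antisymmetry, and transitivity. Transitivity is already established in Proposition \ref{transpatient}, and reflexivity is immediate, since setting \(\alpha=\beta\) turns the defining inequality into an identity (both ratios coincide), so every agent is more patient than herself regardless of the value of \(\alpha_1\). The entire content of the corollary therefore reduces to establishing antisymmetry under the standing hypothesis \(\alpha_1=\beta_1\): I must show that if Alice is more patient than Bob and Bob is more patient than Alice, then \(\alpha=\beta\).

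For antisymmetry I would route the argument through Corollary \ref{corollary311} rather than the ratio form of Theorem \ref{maintheorem}. Since \(\alpha_1=\beta_1\), Corollary \ref{corollary311} says that ``Alice more patient than Bob'' is equivalent to \(\beta_t-\beta_{t+1}\ge \alpha_t-\alpha_{t+1}\) for every \(t\in\bar{\mathcal{T}}\), and symmetrically ``Bob more patient than Alice'' is equivalent to \(\alpha_t-\alpha_{t+1}\ge \beta_t-\beta_{t+1}\) for every \(t\). Assuming both relations hold and combining the two families of inequalities forces \(\beta_t-\beta_{t+1}=\alpha_t-\alpha_{t+1}\) for all \(t\in\bar{\mathcal{T}}\). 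Starting from the common value \(\alpha_1=\beta_1\) and telescoping these equalities then yields \(\alpha_t=\beta_t\) for every \(t\in\mathcal{T}\): formally, \(\alpha_{t+1}=\alpha_t-(\alpha_t-\alpha_{t+1})=\beta_t-(\beta_t-\beta_{t+1})=\beta_{t+1}\), so an induction on \(t\) with base case \(\alpha_1=\beta_1\) delivers \(\alpha=\beta\). This is exactly antisymmetry.

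The main subtlety, and the reason I prefer the difference form of Corollary \ref{corollary311} to the quotient form of Theorem \ref{maintheorem}, is that the ratios \((\beta_t-\beta_{t+1})/(\alpha_t-\alpha_{t+1})\) appearing in the theorem are ill-defined wherever \(\alpha\) is locally flat, i.e.\ where \(\alpha_t=\alpha_{t+1}\); working directly with the differences sidesteps any division-by-zero bookkeeping. The only remaining point to check is that the telescoping is valid uniformly in the horizon. For \(T\in\mathbb{N}\) this is a finite induction terminating at \(t=T\); for \(T=\infty\) the induction establishes \(\alpha_t=\beta_t\) for every \(t\in\mathbb{N}\), which is all that is required. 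With reflexivity, antisymmetry, and transitivity in hand, the ``more patient than'' relation is a partial order on any set of agents sharing the common first discount weight \(\alpha_1=\beta_1\).
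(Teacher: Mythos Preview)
Your proposal is correct and matches the paper's (implicit) argument: the paper does not spell out a proof, merely remarking before the corollary that antisymmetry holds when \(\alpha_1=\beta_1\) and relying on Proposition \ref{transpatient} for transitivity. Your choice to establish antisymmetry via Corollary \ref{corollary311} rather than the ratio form of Theorem \ref{maintheorem} is the natural one under the hypothesis \(\alpha_1=\beta_1\) and, as you observe, cleanly avoids any division-by-zero issues where \(\alpha\) is locally flat.
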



\bibliography{sample.bib}

\appendix

\section{Omitted Proofs}\label{appendix}

\subsection{Lemma \ref{pointlemma} Proof}
\begin{proof}
First, suppose \(T = \infty\). The sequence \(\left\{\sum_{i=1}^t \left(x_i-y_i\right)\right\}_{t=1}^{\infty}\) converges and is, therefore, bounded. Consequently, for any \(t \in \mathbb{N}\),  \[s_t \coloneqq \inf_{k \geq t} \sum_{i=1}^k \left(x_i-y_i\right)\] is well-defined. Then, for any \(t \in \mathbb{N}\), we define \(z_t = s_{t} - s_{t-1}\), where \(s_0 = 0\). By construction \(z_t \geq 0\) for all \(t \in \mathbb{N}\). We define \(\tilde{x}_t = x_t - z_t\) for all \(t \in \mathbb{N}\) and observe that, by construction,
\[\sum_{t=1}^{p} \tilde{x}_t = \sum_{t=1}^{p} \left(x_t - z_t\right) = \sum_{t=1}^{p} x_t - s_p = \sum_{t=1}^{p} x_t - \inf_{k \geq p} \sum_{i=1}^k \left(x_i-y_i\right) \geq \sum_{t=1}^{p} y_t\text{,}\] for all \(p \in \mathbb{N}\). Moreover,
\[\sum_{t=1}^{\infty} \tilde{x}_t = \lim_{p \to \infty} \sum_{t=1}^{p} \tilde{x}_t = \sum_{t=1}^{\infty} x_t - \liminf_{p \to \infty} \sum_{i=1}^p \left(x_i-y_i\right) = \sum_{t=1}^{\infty} y_t\text{,}\] as desired. 

If \(T \in \mathbb{N}\) the proof is virtually identical, \textit{mutatis mutandis}.\end{proof}

\subsection{Lemma \ref{abel2} Proof}
\begin{proof}
    Suppose \(T = \infty\). By Lemma \ref{abel},
    \[\sum_{t=1}^{\infty} \beta_t \tilde{x}_t - \sum_{t=1}^{\infty}\beta_t y_t = \sum_{t=1}^{\infty} \underbrace{\left(\beta_t - \beta_{t+1}\right)}_{\geq 0}\underbrace{\sum_{i=1}^{t} (\tilde{x}_i-y_i) }_{\geq 0} \geq 0\text{.}\]Now suppose \(T \in \mathbb{N}\). By Lemma \ref{abel},
    \[\sum_{t=1}^{T} \beta_t \tilde{x}_t - \sum_{t=1}^{T}\beta_t y_t = \sum_{t=1}^{T-1} \underbrace{\left(\beta_t - \beta_{t+1}\right)}_{\geq 0}\underbrace{\sum_{i=1}^{t} (\tilde{x}_i-y_i) }_{\geq 0} + \beta_T \underbrace{\sum_{i=1}^T \left(\tilde{x}_i - y_i\right)}_{ = 0} \geq 0\text{,}\]
    where the last term equals \(0\) by assumption.\end{proof}

    \subsection{Proposition \ref{twoserene} Proof}

\begin{proof}
    \(\left(\Leftarrow\right)\) Suppose \(\frac{\alpha_1}{\beta_1} \leq \frac{\alpha_2}{\beta_2}\). We have
    \[\frac{\beta_1 x_1 + \beta_2 x_2}{\beta_1 y_1 + \beta_2 y_2} = \frac{\beta_1 x_1 + \beta_2 x_2}{\beta_1 y_1 + \beta_2 (x_2+x_1-y_1)}\text{.}\]
    The derivative of this with respect to \(\beta_1\) simplifies to
    \[\frac{\beta_2\left(x_1+x_2\right)\left(x_1-y_1\right)}{\left(\beta_1 y_1 + \beta_2 (x_2+x_1-y_1)\right)^2} \geq 0\text{.}\]
    Consequently,
    \[\frac{\beta_1 x_1 + \beta_2 x_2}{\beta_1 y_1 + \beta_2 y_2} \geq \frac{x_1 \frac{\beta_2}{\alpha_2}\alpha_1 + \beta_2 x_2}{ y_1 \frac{\beta_2}{\alpha_2}\alpha_1 + \beta_2 y_2} = \frac{\alpha_1 x_1 + \alpha_2 x_2}{\alpha_1 y_1 + \alpha_2 y_2}\text{.}\]

    \medskip

    \noindent \(\left(\Rightarrow\right)\) Suppose for the sake of contraposition \(\frac{\alpha_1}{\beta_1} > \frac{\alpha_2}{\beta_2}\). Let \(x_1 = y_2 = 1\) and \(x_2 = y_1 = 0\). Thus,
    \[\frac{\alpha_1 x_1 + \alpha_2 x_2}{\alpha_1 y_1 + \alpha_2 y_2} = \frac{\alpha_1}{\alpha_2} > \frac{\beta_1}{\beta_2} = \frac{\beta_1 x_1 + \beta_2 x_2}{\beta_1 y_1 + \beta_2 y_2}\text{,}\]
    so Alice is not more patient than Bob.\end{proof}

    \subsection{Lemma \ref{deteriorate} Proof}\label{deteriorateproof}
    \begin{proof}
        Suppose \(x \trianglerighteq y\) and \(x \neq y\) (if \(x = y\), we are done). Let
    \[t_1 \coloneqq \min\left\{t \in \mathcal{T} \colon x_t \neq y_t\right\}\text{.}\]
    As \(x \trianglerighteq y\), \(x_{t_1} > y_{t_1}\). Now let
    \[t_{1}' \coloneqq \min\left\{t \in \mathcal{T} \colon x_t < y_t\right\}\text{.}\]
    As \(x \trianglerighteq y\), \(t_{1}' > t_1\). There are two possibilities: either \(x_{t_1} - y_{t_1} \geq y_{t_{1}'} - x_{t_{1}'}\), or \(x_{t_1} - y_{t_1} < y_{t_{1}'} - x_{t_{1}'}\). In either case, we define \(z^1_t = x_t\) for all \(t \neq t_1, t_{1}'\), \(z^1_{t_1} = x_{t_1} - \eta\) and \(z^1_{t_{1}'} = x_{t_{1}'} + \eta\); where \(\eta = y_{t_{1}'} - x_{t_{1}'}\) in the first case and \(\eta = x_{t_{1}} - y_{t_{1}}\) in the second.

    By construction \(z^1\) is a binary deterioration of \(x\). We then construct \(z^2\) from \(z^1\)--and, subsequently, each \(z^{i+1}\) from \(z^i\) (\(k-1 \geq i \geq 1\))--in the same manner. If \(T \in \mathbb{N}\), we may conclude the result. Let \(T = \infty\). Suppose for the sake of contradiction that there exists \(\eta > 0\) and a \(t \in \mathbb{N}\) such that \(z_t^j - y_t > \eta\) for all \(j \in \mathbb{N}\). By the construction of the sequence of \(z\)s, there exists \(k \in \mathbb{N}\) such that \(z_{i}^k = y_i\) for all \(i < k\). Moreover, by the construction of the sequence of \(z\)s, it must be that \(z_t^k - y_t - \sum_{i = t+1}^{\infty}\left(y_i - z_i^k\right) > \eta > 0\). However, this contradicts the fact that \(\sum_{t=1}^{\infty}\left(z^k_t-y_t\right) = 0\). 
    \end{proof}

    \subsection{Theorem \ref{maintheorem} Proof}
    \begin{proof}
    \(\left(\Rightarrow\right)\) Suppose for the sake of contraposition that \[\inf_{t \in \bar{\mathcal{T}}}\frac{\beta_t - \beta_{t+1}}{\alpha_t - \alpha_{t+1}} < \frac{\beta_1}{\alpha_1}\text{.}\] This implies that there exists a \(k \in \bar{\mathcal{T}}\) such that \[\frac{\beta_k - \beta_{k+1}}{\alpha_k - \alpha_{k+1}} < \frac{\beta_1}{\alpha_1}\text{.}\] 
    
    If \(k = 1\), this implies \(\alpha_1 \beta_2 > \alpha_2 \beta_1\), in which case, following the proof of Proposition \ref{twoserene}, we are done. Consequently, without loss of generality we specify \(k = 2\), i.e.,
    \[\frac{\beta_2 - \beta_{3}}{\alpha_2 - \alpha_{3}} < \frac{\beta_1}{\alpha_1}\text{.}\] 
    Moreover, if 
    \[\frac{\beta_2 - \beta_{3}}{\alpha_2 - \alpha_{3}} < \frac{\beta_2}{\alpha_2}\text{,}\] again following the proof of Proposition \ref{twoserene} we are done. Accordingly, we assume
    \[\frac{\beta_2}{\alpha_2} \leq \frac{\beta_2 - \beta_{3}}{\alpha_2 - \alpha_{3}}\text{,}\]
    which holds if and only if \(\alpha_3 \beta_2 \geq \alpha_2 \beta_3\). In sum, we have
    \[\frac{\beta_3}{\alpha_3} \leq \frac{\beta_2}{\alpha_2} \leq \frac{\beta_2 - \beta_{3}}{\alpha_2 - \alpha_{3}} < \frac{\beta_1}{\alpha_1} \leq \frac{\beta_1 - \beta_{2}}{\alpha_1 - \alpha_{2}}\text{.}\]

    Let \(x_1 = y_1 = 1\), \(x_3 = y_2 = 0\), and \(x_2 = y_3 = \eta > 0\). Then,
    \[\begin{split}
        \frac{\alpha_1 x_1 + \alpha_2 x_2 + \alpha_{3} x_{3}}{\alpha_1 y_1 + \alpha_2 y_2 + \alpha_{3} y_{3}} - \frac{\beta_1 x_1 + \beta_2 x_2 + \beta_3 x_3}{\beta_1 y_1 + \beta_2 y_2 + \beta_3 y_3} &> 0 \quad \Leftrightarrow\\
        \frac{\alpha_1 + \alpha_2 \eta}{\alpha_1 + \alpha_{3}\eta} - \frac{\beta_1 + \beta_2 \eta}{\beta_1 + \beta_3 \eta} &> 0 \quad \Leftrightarrow\\
        \eta\left(\beta_1\left(\alpha_2 - \alpha_3\right) -\alpha_1 \left(\beta_2 - \beta_3\right) + \eta \left(\alpha_2 \beta_3 - \alpha_3 \beta_2\right)\right) &> 0
    \end{split}\]
    If \(\alpha_2 \beta_3 = \alpha_3 \beta_2\), the last inequality holds for any strictly positive \(\eta\). Suppose instead \(\alpha_2 \beta_3 < \alpha_3 \beta_2\). Then, the last inequality holds for all 
    \[\eta \in \left(0,\frac{\beta_1\left(\alpha_2 - \alpha_3\right) -\alpha_1 \left(\beta_2 - \beta_3\right)}{\alpha_3 \beta_2 - \alpha_2 \beta_3}\right)\text{,}\]
    which interval is nonempty by assumption.

    \medskip

    \noindent \(\left(\Leftarrow\right)\) Now let \[\inf_{t \in \bar{\mathcal{T}}}\frac{\beta_t - \beta_{t+1}}{\alpha_t - \alpha_{t+1}} \geq \frac{\beta_1}{\alpha_1}\text{.}\]
    We want to show that for any positive prize sequences \(x\) and \(y\) with \(x \trianglerighteq y\) and \(\sum_{t=1}^{T} x_t = \sum_{t=1}^{T} y_t\), such that \(\sum_{t=1}^{T} \alpha_t y_t \neq 0\) and \(\sum_{t=1}^{T} \beta_t y_t \neq 0\),
    \[\frac{\sum_{t=1}^{T} \alpha_t x_t}{\sum_{t=1}^{T} \alpha_t y_t} \leq \frac{\sum_{t=1}^{T} \beta_t x_t}{\sum_{t=1}^{T} \beta_t y_t}\text{.}\]
    Recall also that, by assumption, \(\alpha_t, \beta_t > 0\) for \(t \in \mathcal{T}\).

    Lemma \ref{deteriorate} means that it suffices to show that for any \(x\) with \(\sum_{t=1}^{T} \alpha_t x_t \neq 0\) and \(\sum_{t=1}^{T} \beta_t x_t \neq 0\) and any binary deterioration of \(x\), \(z\), with \(\sum_{t=1}^{T} \alpha_t z_t \neq 0\) and \(\sum_{t=1}^{T} \beta_t z_t \neq 0\),
        \[\frac{\sum_{t=1}^{T} \beta_t z_t}{\sum_{t=1}^{T} \alpha_t z_t} \leq \frac{\sum_{t=1}^{T} \beta_t x_t}{\sum_{t=1}^{T} \alpha_t x_t}\text{.}\]
        
    Now take \[\frac{\sum_{t=1}^{T} \beta_t x_t}{\sum_{t=1}^{T} \alpha_t x_t}\text{,}\] with \(\sum_{t=1}^{T} \alpha_t x_t \neq 0\) and \(\sum_{t=1}^{T} \beta_t x_t \neq 0\) and an arbitrary binary deterioration of \(x\), \(z\), with \(\sum_{t=1}^{T} \alpha_t z_t \neq 0\) and \(\sum_{t=1}^{T} \beta_t z_t \neq 0\). That is, take some \(x_k\) and subtract \(\eta\), add \(\eta\) to some \(x_{s}\) with \(s, k \in \bar{\mathcal{T}}\) and \(s > k\). We have 
\[\frac{\sum_{t=1}^{T} \beta_t x_t}{\sum_{t=1}^{T} \alpha_t x_t} - \frac{\sum_{t=1}^{T} \beta_t z_t}{\sum_{t=1}^{T} \alpha_t z_t} = \frac{\sum_{t=1}^{T} \beta_t x_t}{\sum_{t=1}^{T} \alpha_t x_t} - \frac{\sum_{t=1}^{T} \beta_t x_t - \eta (\beta_k - \beta_s)}{\sum_{t=1}^{T} \alpha_t x_t - \eta (\alpha_k - \alpha_s)}\text{.}\]
The right-hand side of this is positive if and only if
\[\frac{\beta_k - \beta_s}{\alpha_k - \alpha_s} \geq \frac{\sum_{t=1}^{T} \beta_t x_t}{\sum_{t=1}^{T} \alpha_t x_t}\text{.}\]

Observe that, appealing to the mediant inequality, it suffices to show
\[\inf_{k,s \in \bar{\mathcal{T}}, s > k}\frac{\beta_k - \beta_s}{\alpha_k - \alpha_s} \geq \sup_{t \in \bar{\mathcal{T}}}\frac{\beta_t}{\alpha_t}\text{.}\]

\begin{claim}\label{claim3.9}
    For any \(k, s \in \bar{\mathcal{T}}\) with \(s > k\)
\[\frac{\beta_k - \beta_s}{\alpha_k - \alpha_s} \geq \min\left\{\frac{\beta_{k} - \beta_{k+1}}{\alpha_{k} - \alpha_{k+1}}, \frac{\beta_{k+1} - \beta_{k+2}}{\alpha_{k+1} - \alpha_{k+2}},\dots,\frac{\beta_{s-1} - \beta_{s}}{\alpha_{s-1} - \alpha_{s}}\right\}\text{.}\]
\end{claim}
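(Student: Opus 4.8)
The plan is to prove this as an instance of the mediant (ratio-of-sums) inequality, after telescoping. First I would write the numerator and denominator as telescoping sums: for $i \in \{k, \dots, s-1\}$ set $b_i := \beta_i - \beta_{i+1}$ and $a_i := \alpha_i - \alpha_{i+1}$, so that
\[\beta_k - \beta_s = \sum_{i=k}^{s-1} b_i \quad\text{and}\quad \alpha_k - \alpha_s = \sum_{i=k}^{s-1} a_i.\]
Because $\alpha$ and $\beta$ are decreasing, every $a_i \geq 0$ and $b_i \geq 0$, and the quantities $b_i/a_i$ are exactly the consecutive-step ratios appearing in the right-hand minimum.

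The core step is the following elementary fact: if $a_i > 0$ for each $i$ and $m := \min_i (b_i/a_i)$, then $b_i \geq m\, a_i$ for every $i$; summing over $i$ gives $\sum_i b_i \geq m \sum_i a_i$, and dividing by the positive quantity $\sum_i a_i$ yields $\frac{\sum_i b_i}{\sum_i a_i} \geq m$. Applied to the telescoped sums above, this is precisely the claimed inequality. Equivalently, one could argue by induction on $s-k$, invoking at each step the two-term statement that $\frac{p_1+p_2}{q_1+q_2}$ lies weakly above $\min\{p_1/q_1,\, p_2/q_2\}$ when $q_1,q_2 > 0$; I prefer the one-shot summation argument since it avoids bookkeeping.

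The main obstacle is the possibility that some denominator $a_i = \alpha_i - \alpha_{i+1}$ vanishes, since the discount sequences are only assumed weakly decreasing, in which case the individual ratio $b_i/a_i$ is undefined. To handle this I would split the index set into $I_+ = \{i : a_i > 0\}$ and $I_0 = \{i : a_i = 0\}$. For $i \in I_0$ we have $a_i = 0$ and $b_i \geq 0$, so these indices contribute nothing to $\sum_i a_i$ and only a nonnegative amount to $\sum_i b_i$; hence
\[\frac{\sum_i b_i}{\sum_i a_i} = \frac{\sum_{i \in I_+} b_i + \sum_{i \in I_0} b_i}{\sum_{i \in I_+} a_i} \geq \frac{\sum_{i \in I_+} b_i}{\sum_{i \in I_+} a_i} \geq \min_{i \in I_+} \frac{b_i}{a_i},\]
and this last minimum is no smaller than the minimum over all $i$ under the natural convention that a zero denominator with nonnegative numerator contributes $+\infty$ and so never lowers the minimum. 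The only genuinely degenerate subcase, $\alpha_k - \alpha_s = 0$ (so $I_+ = \emptyset$), forces $\alpha_k = \dots = \alpha_s$ and renders the left-hand side of the claim itself undefined; I would simply observe that this case does not arise where the claim is applied, since there the operative denominators are strictly positive.
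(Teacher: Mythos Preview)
Your proof is correct and follows essentially the same approach as the paper: both telescope $\beta_k-\beta_s$ and $\alpha_k-\alpha_s$ into consecutive differences and then invoke the mediant inequality. Your treatment is in fact more thorough, since you spell out the mediant argument and address the degenerate case $\alpha_i-\alpha_{i+1}=0$ that the paper's one-line proof leaves implicit.
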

\begin{proof}
As 
\[\beta_k - \beta_s = \left(\beta_k - \beta_{k+1}\right) + \left(\beta_{k+1} - \beta_{k+2}\right) + \cdots + \left(\beta_{s-1} - \beta_{s}\right)\text{,}\]
and
\[\alpha_k - \alpha_s = \left(\alpha_k - \alpha_{k+1}\right) + \left(\alpha_{k+1} - \alpha_{k+2}\right) + \cdots + \left(\alpha_{s-1} - \alpha_{s}\right)\text{,}\]
this follows from the mediant inequality.
\end{proof}

\begin{claim}\label{claim310}
    \[\inf_{t \in \bar{\mathcal{T}}} \frac{\beta_{t} - \beta_{t+1}}{\alpha_{t} - \alpha_{t+1}} \geq \frac{\beta_1}{\alpha_1} \quad \Rightarrow \quad \inf_{t \in \bar{\mathcal{T}}} \frac{\beta_{t} - \beta_{t+1}}{\alpha_{t} - \alpha_{t+1}} \geq \sup_{t \in \bar{\mathcal{T}}} \frac{\beta_t}{\alpha_t}\]
\end{claim}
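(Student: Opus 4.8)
The plan is to prove the equivalent pointwise statement that $\frac{\beta_t}{\alpha_t} \le L$ for every $t \in \bar{\mathcal{T}}$, where I write $L := \inf_{t \in \bar{\mathcal{T}}} \frac{\beta_t - \beta_{t+1}}{\alpha_t - \alpha_{t+1}}$ for the quantity in the hypothesis. Once this bound holds for all $t$, taking the supremum over $t$ immediately gives $\sup_{t} \frac{\beta_t}{\alpha_t} \le L$, which is exactly the conclusion. The hypothesis supplies the single seed inequality $\frac{\beta_1}{\alpha_1} \le L$, i.e.\ $\beta_1 \le L\alpha_1$, and the whole argument amounts to propagating this one fact forward by telescoping.

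Concretely, I would first record that for every $j \in \bar{\mathcal{T}}$ the cleared (product-form) inequality
\[\beta_j - \beta_{j+1} \ge L\,(\alpha_j - \alpha_{j+1})\]
holds. This is just the definition of $L$ as an infimum whenever $\alpha_j > \alpha_{j+1}$, and it holds trivially when $\alpha_j = \alpha_{j+1}$, since then the right-hand side is $0$ while the left-hand side is nonnegative because $\beta$ is decreasing. Summing this inequality over $j = 1, \dots, t-1$ telescopes on both sides to
\[\beta_1 - \beta_t \ge L\,(\alpha_1 - \alpha_t),\]
which rearranges to $\beta_t \le (\beta_1 - L\alpha_1) + L\alpha_t$.

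To finish, I would invoke the hypothesis $\frac{\beta_1}{\alpha_1} \le L$ in the form $\beta_1 - L\alpha_1 \le 0$ (using $\alpha_1 > 0$). Substituting into the previous display gives $\beta_t \le L\alpha_t$, and dividing by $\alpha_t > 0$ yields $\frac{\beta_t}{\alpha_t} \le L$, as desired. Since $t \in \bar{\mathcal{T}}$ was arbitrary, the supremum bound follows. Note that this simultaneously shows $\beta_t/\alpha_t$ is bounded by $\beta_1/\alpha_1 \le L$, consistent with the stated corollary on monotonicity of $\alpha_t/\beta_t$.

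The only genuine subtlety—what I would flag as the main obstacle—is the possible vanishing of denominators $\alpha_j - \alpha_{j+1} = 0$, which would render the individual ratios defining $L$ ill-defined or infinite. Phrasing the per-step estimate in the cleared form $\beta_j - \beta_{j+1} \ge L(\alpha_j - \alpha_{j+1})$ is exactly what sidesteps this: it stays valid in the degenerate case by monotonicity of $\beta$, so the telescoping sum goes through without isolating such terms. (An alternative route avoiding the sum is an induction on $t$ applying the mediant inequality to $\frac{\beta_t}{\alpha_t} = \frac{\beta_{t+1} + (\beta_t - \beta_{t+1})}{\alpha_{t+1} + (\alpha_t - \alpha_{t+1})}$, which yields $\frac{\beta_{t+1}}{\alpha_{t+1}} \le \frac{\beta_t}{\alpha_t}$ at each step; I would present the telescoping version as it is cleaner and handles the denominators uniformly.)
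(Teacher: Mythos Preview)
Your argument is correct and is essentially the same as the paper's: both obtain the telescoped inequality $\beta_1 - \beta_t \geq L(\alpha_1 - \alpha_t)$ and then combine it with the hypothesis $L \geq \beta_1/\alpha_1$ to conclude $\beta_t/\alpha_t \leq \beta_1/\alpha_1 \leq L$. The only cosmetic difference is that the paper phrases the telescoping step in ratio form via Claim~\ref{claim3.9} (the mediant inequality), whereas you first clear denominators and sum---which, as you note, has the mild advantage of handling the degenerate case $\alpha_j = \alpha_{j+1}$ without separate comment.
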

\begin{proof}
    For any \(k \in \bar{\mathcal{T}}\),
    \[\inf_{t \in \bar{\mathcal{T}}} \frac{\beta_{t} - \beta_{t+1}}{\alpha_{t} - \alpha_{t+1}} \geq \frac{\beta_1}{\alpha_1} \ \Rightarrow \ \frac{\beta_{1} - \beta_{k}}{\alpha_{1} - \alpha_{k}} \geq \frac{\beta_1}{\alpha_1} \ \Rightarrow \ \frac{\beta_1}{\alpha_1} \geq \frac{\beta_k}{\alpha_k} \text{,}\]
    where the first implication uses Claim \ref{claim3.9}.
    \end{proof}
    From Claim \ref{claim3.9} we have
    \[\begin{split}
        \inf_{k,s \in \bar{\mathcal{T}}, s > k}\frac{\beta_k - \beta_s}{\alpha_k - \alpha_s} &\geq \inf_{k,s \in \bar{\mathcal{T}}, s > k}\min\left\{\frac{\beta_{k} - \beta_{k+1}}{\alpha_{k} - \alpha_{k+1}}, \frac{\beta_{k+1} - \beta_{k+2}}{\alpha_{k+1} - \alpha_{k+2}},\dots,\frac{\beta_{s-1} - \beta_{s}}{\alpha_{s-1} - \alpha_{s}}\right\}\\ &=\inf_{t \in \bar{\mathcal{T}}}\frac{\beta_t - \beta_{t+1}}{\alpha_t - \alpha_{t+1}}\text{.}
    \end{split}\]
    Finally, combining this with Claim \ref{claim310},
    we have
    \[\inf_{t \in \bar{\mathcal{T}}}\frac{\beta_t - \beta_{t+1}}{\alpha_t - \alpha_{t+1}} \geq \frac{\beta_1}{\alpha_1} \quad \Rightarrow \quad \inf_{k,s \in \bar{\mathcal{T}}, s > k}\frac{\beta_k - \beta_s}{\alpha_k - \alpha_s} \geq \sup_{t \in \bar{\mathcal{T}}}\frac{\beta_t}{\alpha_t}\text{,}\]
    and we conclude the theorem.\end{proof}

\subsection{Proposition \ref{transpatient} Proof}
\begin{proof}
        Let Michelle discount according to \(\gamma\), Alice and Bob according to \(\alpha\) and \(\beta\). Let \[\inf_{t \in \bar{\mathcal{T}}}\frac{\beta_t - \beta_{t+1}}{\alpha_t - \alpha_{t+1}} \geq \frac{\beta_1}{\alpha_1} \quad \text{and} \quad \inf_{t \in \bar{\mathcal{T}}}\frac{\gamma_t - \gamma_{t+1}}{\beta_t - \beta_{t+1}} \geq \frac{\gamma_1}{\beta_1}\text{.}\]
    Then, suppose for the sake of contradiction that
    \[\frac{\gamma_1}{\alpha_1} > \inf_{t \in \bar{\mathcal{T}}}\frac{\gamma_t - \gamma_{t+1}}{\alpha_t - \alpha_{t+1}}\text{.}\]
    Consequently
    \[\inf_{t \in \bar{\mathcal{T}}}\frac{\gamma_t - \gamma_{t+1}}{\beta_t - \beta_{t+1}} \geq \frac{\gamma_1}{\beta_1} > \frac{\alpha_1}{\beta_1}\inf_{t \in \bar{\mathcal{T}}}\frac{\gamma_t - \gamma_{t+1}}{\alpha_t - \alpha_{t+1}} \geq \frac{\inf_{t \in \bar{\mathcal{T}}}\frac{\gamma_t - \gamma_{t+1}}{\alpha_t - \alpha_{t+1}}}{\inf_{t \in \bar{\mathcal{T}}}\frac{\beta_t - \beta_{t+1}}{\alpha_t - \alpha_{t+1}}}\text{,}\]
    a contradiction.
    \end{proof}
    
\end{document}